\definecolor{lightred}{RGB}{241, 225, 222}
\definecolor{lightred2}{RGB}{241, 225, 222}
\definecolor{color1}{rgb}{0.1,0.498039215686275,0.9549019607843137}
\definecolor{alizarin}{rgb}{0.82, 0.1, 0.26}
\definecolor{antiquewhite}{rgb}{0.98, 0.92, 0.84}
\definecolor{azure}{rgb}{0.94, 1.0, 1.0}
\definecolor{offwhite}{rgb}{0.98, 0.95, 0.95}
\definecolor{pigment}{rgb}{0.2, 0.2, 0.6}
\definecolor{BrickRed}{rgb}{0.8, 0.25, 0.33}
\newtheorem{lemma}{Lemma}
\newtheorem{thm}{Theorem}
\newtheorem{corollary}{Corollary}
\newcommand{\R}{\mathbb{R}}
\newcommand{\1}{\mathds{1}}
\title{\LARGE \bf
    Certified Approximate Reachability (CARe):\\ Formal Error Bounds on Deep Learning of Reachable Sets}
\author{Prashant Solanki$^{*1}$, Nikolaus Vertovec$^{*2}$, Yannik Schnitzer$^{2}$,\\
        Jasper Van Beers$^{1}$, Coen de Visser$^{1}$, Alessandro Abate$^{2}$
\thanks{*Authors contributed equally to this article.}
\thanks{$^{1}$Department of Aerospace Engineering, TU Delft, Netherlands
        {\small \{p.solanki, j.j.vanbeers, c.c.devisser\}@tudelft.nl}}%
\thanks{$^{2}$Department of Computer Science, University of Oxford, OX1 3PJ, UK
        {\small \{nikolaus.vertovec, yannik.schnitzer, alessandro.abate\}@cs.ox.ac.uk}}%
}
\begin{document}
\maketitle

\begin{abstract}    
Recent approaches to leveraging deep learning for computing reachable sets of continuous-time dynamical systems have gained popularity over traditional level-set methods, as they overcome the curse of dimensionality. However, as with level-set methods, considerable care needs to be taken in limiting approximation errors, particularly since no guarantees are provided during training on the accuracy of the learned reachable set. To address this limitation, we introduce an \(\epsilon\)-approximate Hamilton-Jacobi Partial Differential Equation (HJ-PDE), which establishes a relationship between training loss and accuracy of the \textit{true} reachable set. To formally certify this approximation, we leverage Satisfiability Modulo Theories (SMT) solvers to bound the residual error of the HJ-based loss function across the domain of interest. Leveraging Counter Example Guided Inductive Synthesis (CEGIS), we close the loop around learning and verification, by fine-tuning the neural network on counterexamples found by the SMT solver, thus improving the accuracy of the learned reachable set. To the best of our knowledge, Certified Approximate Reachability (CARe) is the first approach to provide soundness guarantees on learned reachable sets of continuous dynamical systems.
\end{abstract}

\newif\iflongversion

\longversiontrue

\section{Introduction}
Ensuring the safety of autonomous systems under uncertainty is a fundamental challenge in control and verification. Reachability analysis plays a key role in addressing safety concerns across various domains, including spacecraft trajectory design~\cite{Vertovec2021, Vertovec2023}, ground transportation systems~\cite{livadas1998formal, lygeros1998verified}, air traffic management~\cite{livadas2000high, tomlin2001safety}, and flight control~\cite{Vertovec2024, lygeros2004reachability, lygeros1999controllers}. The theoretical foundations of reachability stem from viability theory~\cite{aubin2011viability, aubin1991viability}, while computational techniques have been developed for both exact and approximate reachable set computations in hybrid systems~\cite{cardaliaguet1996differential, cardaliaguet2007differential, asarin2000effective, mitchell2007toolbox, mitchell2001validating}. These advances have formulated reachability analysis as an optimal control problem, where reachable, viable, or invariant sets are represented as level sets of a value function satisfying a Hamilton-Jacobi (HJ) partial differential equation (PDE)~\cite{bansal2017hamilton, chen2021fastrack, chen2018decomposition, lygeros2004reachability, maidens2013lagrangian}. 

Traditionally, level set methods are used to compute the unique viscosity solution of the HJ equation~\cite{mitchell2007toolbox}. However, even with high-order numerical schemes such as WENO methods, the non-differentiability of the value function introduces numerical inaccuracies~\cite{Vertovec2022}. While refining the grid used in finite element methods such as level set methods improves numerical accuracy, it is hard to predict what constitutes a sufficiently fine grid. Moreover, these grid-based approaches suffer from the curse of dimensionality, as the number of required grid points grows exponentially with system dimensionality.

Deepreach~\cite{bansal2021deepreach}, a recent approach leveraging neural networks, has enabled solving the HJ equation without relying on finite difference methods. If the loss function enforcing the PDE conditions converges to zero in the region of interest, the learned value function is the unique viscosity solution of the HJ-PDE, ensuring that its zero-level set correctly represents the reachable set. This approach improves scalability to higher-dimensional systems by mitigating the curse of dimensionality.

In this work, we demonstrate an additional advantage of using neural networks to approximate the value function: the ability to provide a formally bounded \(\epsilon\)-accurate reachable set. While both level-set methods and DeepReach yield empirically accurate solutions, their adherence to the HJ-PDE conditions--and thus the validity of the reachable set--has not been quantified. Follow-up work to DeepReach uses probabilistic methods to recover the reachable set \cite{Lin_2023, Lin_2024, Tayal2025}. In contrast, we provide a formal approach to verification that does not rely on recovery of the reachable set, but rather provides a sound over- and under-approximation of the reachable set. Our key contributions are: 
\begin{enumerate}
    \item We establish a formal bound \(\epsilon\) on the residual error, ensuring that the absolute value of the HJ-based loss function remains within a specified threshold across the domain of interest.  

    \item We demonstrate that the neural network-based value function, constrained by \(\epsilon\), provides a certified approximation of the true value function, allowing for over- and under-approximation of the reachable set.  
\end{enumerate}
The remainder of the paper is structured as follows: Section \ref{sec2:problem_setup} provides an overview of the reachability problems and introduces the Hamilton-Jacobi (HJ) partial differential equation (PDE) whose solution we aim to learn. Section \ref{sec3:deep_learning} presents the deep learning approach for computing the reachable set, while Section \ref{sec4:smt} discusses the formal verification method used to derive and certify an upper bound on the loss across the entire domain of interest. In Section \ref{sec5:main_theorem}, we introduce our main contribution, establishing a connection between the residual training loss and an under- and over-approximation of the reachable set. Finally, Section \ref{sec6:case_study} provides implementation details and a case study.

\section{Problem Setup} \label{sec2:problem_setup}
In reachability theory, a key objective is to compute the backward reachable set (BRS) of a dynamical system, which consists of all states from which trajectories can reach a given target set at the end of the time horizon. In contrast, the backward reachable tube (BRT) represents the set of states that can reach the target set within a specified time horizon~\cite{bansal2017hamilton}. If the target set represents unsafe states, the BRS/BRT identifies states that may lead to unsafe conditions and should be avoided. To account for adversarial disturbances, safety-critical scenarios are often modeled as a two-player game, where Player 1 represents the control input and Player 2 represents the disturbance input~\cite{margellos2011hamilton}.

Mathematically, consider a dynamical system with state \( x \in \mathbb{R}^{m} \) governed by the ordinary differential equation (ODE)  
\begin{equation}\label{eq_main_ODE}
    \dot{x}(s) \in f(s,x(s),u(s),d(s)), \quad t_{0} \leq s \leq T,
\end{equation}
where \( T \geq t_{0} \) is the fixed time horizon. The initial state is given by \( x(t_{0}) \coloneqq x_{0} \), and the control and disturbance inputs are measurable functions
\begin{align*}
    u:[t_{0},T] \rightarrow \mathcal{U}, \quad d:[t_{0},T] \rightarrow \mathcal{D},
\end{align*}
with \( \mathcal{U} \subset \mathbb{R}^{k} \) and \( \mathcal{D} \subset \mathbb{R}^{l} \) being compact sets. The dynamics \( f:[t_{0},T] \times \mathbb{R}^{m} \times \mathcal{U} \times \mathcal{D} \rightarrow \mathbb{R}^{m} \) satisfy the following conditions for some constants \( C_{1}, C_{2} \):  
\begin{align}
    |f(t,x,u,d)| &\leq C_{1}, \label{eq_main_system_assumtion_1} \\
    |f(t,x,u,d)-f(t,\hat{x},u,d)| &\leq C_{2} |x - \hat{x}|, \label{eq_main_system_assumtion_2}
\end{align}
for all \( t \in [t_{0},T] \), \( x,\hat{x} \in \mathbb{R}^{m} \), \( u \in \mathcal{U} \), and \( d \in \mathcal{D} \). These assumptions ensure that the ODE \eqref{eq_main_ODE} admits a unique solution  
\begin{equation}\label{eq_main_solution_ODE}
    x(t) = \phi(t,t_{0},x_{0},u(\cdot),d(\cdot)).
\end{equation}  
Next, we define the sets of control and disturbance policies:  
\begin{align*}
    \mathcal{M}_{[t,T]} &\equiv\{u:[t,T]\rightarrow \mathcal{U} \mid u \text{ is measurable} \}, \\
    \mathcal{N}_{[t,T]} &\equiv\{d:[t,T]\rightarrow \mathcal{D} \mid d \text{ is measurable} \}.
\end{align*}
A nonanticipative strategy is a mapping \( \beta: \mathcal{M}_{[t,T]} \rightarrow \mathcal{N}_{[t,T]} \) such that for all \( s \in [t,T] \) and for all \( u, \hat{u} \in \mathcal{M}_{[t,T]} \), if \( u(\tau) = \hat{u}(\tau) \) for almost every \( \tau \in [t,s] \), then \( \beta[u](\tau) = \beta[\hat{u}](\tau) \) for almost every \( \tau \in [t,s] \). The class of such strategies is denoted as \( \Delta_{[t,T]} \).

Given a target set \( \mathcal{G} \) with a signed distance function \( g(x) \) such that \( g(x) \leq 0 \iff x \in \mathcal{G} \), the BRT is defined as  
\begin{align}\label{ch3:eqn:reachavoidtube}
	\mathrm{BRT}_{\mathcal{G}}([t_{0},T]) =& \big\{ x \mid \exists \beta \in \mathcal{N}_{[t_{0},T]},
        \forall u \in  \mathcal{M}_{[t_{0},T]}, \\ \nonumber
	&\exists s \in [t_{0},T]: \phi(s,t_{0},x_{0},u(\cdot),\beta(\cdot)) \in \mathcal{G} \big\}.
\end{align}
Equivalently, the BRT is given by the subzero level set of the value function \( V^{*}(t,x) \), where  
\begin{align}\label{eq_main_value_func_star}
   V^{*}(t,x) = \inf_{\beta \in \Delta_{[t,T]}} \sup_{u \in \mathcal{M}_{[t,T]}} \inf_{\tau \in [t,T]} g(\phi(\tau,t,x,u(\cdot),\beta(\cdot))).
\end{align}
Let us introduce the shorthand notation \( D_{x}V(t,x) = \frac{\partial V(t,x)}{\partial x} \) and \( D_{t}V(t,x) = \frac{\partial V(t,x)}{\partial t} \). Then, a standard result of HJ reachability~\cite{evans1984differential, margellos2011hamilton, bansal2017hamilton} allows us to reduce the optimization problem in Equation~\eqref{eq_main_value_func_star} from an infinite-dimensional optimization over policies to an optimization over control and disturbance inputs by formulating the value function as the viscosity solution of the HJ-PDE:
\begin{align}\label{eq_hjb_Pde_star}
     D_{t}V^{*}(t,x) +  \min \{ 0 , \mathcal{H}(t,x, D_{x}V^{*})\} = 0, \nonumber \\ 
    V^{*}(T,x) = g(x), 
\end{align}
with the Hamiltonian defined as  
\begin{align*}
    \mathcal{H}(t,x,p) = \max_{u \in \mathcal{U}} \min_{d \in \mathcal{D}} [f(t, x, u, d) \cdot p].
\end{align*}
The HJ-PDE can be solved using level-set methods~\cite{mitchell2001validating, tomlin2000game, lygeros2004reachability} or through curriculum training of a neural network~\cite{bansal2021deepreach}. 

\section{Deep Learning Approach to HJ Reachability}\label{sec3:deep_learning}
Inspired by the self-supervised framework of Physics-Informed Neural Networks (PINNs)~\cite{Raissi_2019}, the solution of the PDE~\eqref{eq_hjb_Pde_star} can be effectively learned, as demonstrated in~\cite{bansal2021deepreach}. 

Let \( V_{\theta} \) denote the approximate value function learned by a neural network. To ensure that the zero level set of \( V_{\theta} \) correctly represents the BRT, the network must achieve zero loss for the loss function defined as
\begin{align}\label{eq_deepreach_loss_func}
    \mathcal{L}(t_{i}, x_{i};\theta) &= \1(t_{i} = T)\mathcal{L}_{1}(x_{i};\theta) + \lambda \mathcal{L}_{2}(t_{i}, x_{i};\theta), \\
    \mathcal{L}_{1}(x_{i};\theta) &= \|V_{\theta}(T, x_{i}) - g(x_{i})\|, \\
    \mathcal{L}_{2}(t_{i}, x_{i};\theta) &= \|D_{t} V_{\theta} (t_{i}, x_{i}) + \min[0,\mathcal{H}(t_{i}, x_{i}, D_{x}V_{\theta})]\|,
\end{align}
where \( \1(t_{i} = T) \) is an indicator function ensuring that \( \mathcal{L}_{1} \) is only applied at \( t_{i} = T \), and \( \lambda \) controls the balance between the two loss terms. The nonlinear nature of the neural network might result in areas of the state space with significantly higher loss than empirically observed during training. This acute shortcoming necessitates a formal approach to verifying the approximation errors. To this end, in the subsequent sections, we will
\begin{enumerate}
    \item establish a verification method to ensure \( |\mathcal{L}_1(x;\theta)| < \epsilon_1 \) and \( |\mathcal{L}_2(t,x;\theta)| < \epsilon_2 \) for all \( x \) and \( t \) within the domain of interest; and 
    \item derive upper and lower bounds on \( V_{\theta} \) in terms of viscosity solutions to the HJ-PDE, thus enabling under- and over-approximations of the BRT/BRS.
\end{enumerate}

\section{Formal Synthesis of Neural Value Functions} \label{sec4:smt}
Our approach to synthesizing the value function solution follows a two-phase process--a learning phase and a certification phase--that alternates in a Counter Example Guided Inductive Synthesis (CEGIS) loop~\cite{Abate_2018}. The learning phase employs a curriculum training approach, discussed in detail in Section \ref{sec6:case_study}, terminating when we empirically satisfy \( |\mathcal{L}_1(x_{i};\theta)| < \epsilon_1 \) and \( |\mathcal{L}_2(t_{i}, x_{i};\theta)| < \epsilon_2 \), for \(N\) randomly sampled points \( (t_1, x_1), \ldots, (t_n, x_n) \).

To further ensure that \( |\mathcal{L}_1(x;\theta)| < \epsilon_1 \) and \( |\mathcal{L}_2(t, x;\theta)| < \epsilon_2 \) for all \( x \) in the domain of interest \( \mathcal{X} \) and for all \( t \in [t_{0}, T] \), we design a sound certification phase: this certification is performed via SMT (Satisfiability Modulo Theories) solving, which symbolically reasons over the continuous domain \( [t_{0}, T] \times \mathcal{X} \), hence generalizing across the sample-based loss error. While computationally intensive, formal verification provides soundness guarantees, allowing us to rigorously validate the learned value function as an approximate solution to the HJ-PDE.

Several approaches exist for certifying the accuracy of the learned value function. To remain general in our choice of dynamical systems, we employ an SMT solver capable of handling quantifier-free nonlinear real arithmetic formulae~\cite{Franzle_2007}. This allows us to incorporate arbitrary nonlinear activation functions into the neural network. Furthermore, it enables us to address systems with nonlinear dynamics, which induce nonlinearities in the Hamiltonian. Specifically, we employ \texttt{dReal}~\cite{Gao_2013}, which supports both polynomial and non-polynomial terms, such as transcendental functions (e.g., trigonometric and exponential functions).

After generating a symbolic representation of the neural network, the SMT solver searches for an assignment of variables \( (t, x) \) that satisfies the quantifier-free formula
\begin{align}
\big(& x \in \mathcal{X} \land \|V_{\theta}(T, x)-g(x)\| > \epsilon_1 \big) \label{formula_1} \; \lor \\
\big(& x \in \mathcal{X} \land t \in [t_{0}, T] \nonumber \; \land \\
   \| & D_{t} V_{\theta}(t, x) + \min[0,\mathcal{H}(t, x, D_{x}V_{\theta})]\| > \epsilon_2 \big)
   \raisetag{-8pt}\label{formula_2}
\end{align}
The logical disjunction between \eqref{formula_1} and \eqref{formula_2} allows us to split the SMT call into separate parallel queries. If one query identifies a valid assignment, i.e., a counterexample to either \( |\mathcal{L}_1(x;\theta)| < \epsilon_1 \) or \( |\mathcal{L}_2(t, x;\theta)| < \epsilon_2 \), the remaining SMT calls terminate. A counterexample indicates the necessity to finetune the neural network, as it does not yet approximate a valid value function with sufficient accuracy. We leverage the counterexample for targeted training specifically at inputs where the network is insufficiently accurate. We sample additional training data points for finetuning in close proximity to the counterexample, thus reducing the need for extensive sampling in regions of the input space that are already well approximated. Finetuning and certification are alternately repeated within an inductive synthesis loop (CEGIS), progressively improving the neural value function in targeted regions of the input space, until the certifier determines the Formulae \eqref{formula_1} and \eqref{formula_2} to be unsatisfiable. This implies that there exists no counterexample to \( |\mathcal{L}_1(x;\theta)| < \epsilon_1 \) and \( |\mathcal{L}_2(t, x;\theta)| < \epsilon_2 \), formally proving the validity of the learned value function over the entire domain of interest. In the following section, we derive results establishing how this formal certification can be leveraged to obtain sound over- and under-approximations of the BRT and BRS from the neural value function.


\section{\texorpdfstring{$\epsilon$}{ε}-Accurate Value Functions} \label{sec5:main_theorem}

The certification of the neural value function in the previous section established that for all \( (t, x) \in [t_{0}, T] \times \mathcal{X} \), there exists some \( \hat{\epsilon} \in [0, \epsilon_2] \) such that
\begin{align*}
    \left\| D_{t} V_{\theta}(t, x) + \min[0,\mathcal{H}(t, x, D_{x}V_{\theta})] \right\| = \hat{\epsilon},
\end{align*}
which can equivalently be rewritten as
\begin{align*}
    \left\| D_{t} V_{\theta}(t, x) + \min[\pm \hat{\epsilon},\tilde{\mathcal{H}}(t, x, D_{x}V_{\theta})] \right\| = \hat{\epsilon}, \nonumber \\
    \tilde{\mathcal{H}}(t, x, p) = \max_{u \in \mathcal{U}} \min_{d \in \mathcal{D}} \left[ f(t, x, u, d) \cdot p \pm\hat{\epsilon} \right].
\end{align*}
We consider the worst-case realisation of \(\hat{\epsilon}\), i.e., \(\hat{\epsilon} = \epsilon_2\). We will now show that, for \(\epsilon = (\epsilon_1 + \epsilon_2 (T-t))\),
\begin{equation} \label{eq_value_func_star_nn_bounded}
    V_{\theta}(t,x) - \epsilon\leq V^{*}(t,x) \leq V_{\theta}(t,x) + \epsilon,
\end{equation} 
which allows us to use the neural value function to reason about the true BRT and BRS.

\subsection{Bounding the True Value Function}
To derive these bounds, we introduce the modified value functions \( \overline{V} \) and \( \underline{V} \), which are the unique viscosity solutions under the worst-case realizations of \( \hat{\epsilon} \):
\begin{align}
   \overline{V}(t,x) &= \inf_{\beta(\cdot) \in \Delta_{[t,T]}} \sup_{u(\cdot) \in \mathcal{M}_{[t,T]}} \nonumber \\
   &\quad \inf_{\tau \in [t,T]} \left[ \int_{t}^{\tau} \epsilon_2 \, ds + g(\phi(\tau,t,x,u(\cdot),\beta(\cdot))) \right], \label{eq_main_value_func} \\
   \underline{V}(t,x) &= \inf_{\beta(\cdot) \in \Delta_{[t,T]}} \sup_{u(\cdot) \in \mathcal{M}_{[t,T]}} \nonumber \\
   &\quad \inf_{\tau \in [t,T]} \left[ \int_{t}^{\tau} -\epsilon_2 \, ds + g(\phi(\tau,t,x,u(\cdot),\beta(\cdot))) \right]. \label{eq_main_value_func_negative}
\end{align}

Since \( \epsilon_2 \) is a positive constant, it follows that
\begin{equation}
    \underline{V}(t,x) \leq V^*(t,x) \leq \overline{V}(t,x).
\end{equation}
Furthermore, using the inequality \( -\epsilon_2 (T-t) \leq \int_{t}^{\tau} -\epsilon_2 \, ds \), we obtain:
\begin{align*}
    &\inf_{\tau \in [t,T]} g(\phi(\tau,t,x,u(\cdot),\beta(\cdot))) - \epsilon_2 (T-t) \nonumber \\
    & \leq \inf_{\tau \in [t,T]} \left[ g(\phi(\tau,t,x,u(\cdot),\beta(\cdot))) - \int_{t}^{\tau} \epsilon_2 \, ds \right].
\end{align*}
Similarly, using \( \epsilon_2 (T-t) \geq \int_{t}^{\tau} \epsilon_2 \, ds \), we obtain:
\begin{align*}
    &\inf_{\tau \in [t,T]} g(\phi(\tau,t,x,u(\cdot),\beta(\cdot))) + \epsilon_2 (T-t) \nonumber \\
    & \geq \inf_{\tau \in [t,T]} \left[ g(\phi(\tau,t,x,u(\cdot),\beta(\cdot))) + \int_{t}^{\tau} \epsilon_2 \, ds \right].
\end{align*}
Thus, we derive the key bound:
\begin{equation} \label{eqn:V_bnds}
    \overline{V}(t,x) - \epsilon_2(T-t) \leq V^{*}(t,x) \leq \underline{V}(t,x) + \epsilon_2(T-t).
\end{equation}

\subsection{Relating Bounds to the Neural Value Function}
To relate \( \overline{V} \) and \( \underline{V} \) to the neural value function, we need to establish that these auxiliary value functions are the unique viscosity solutions of the \( \epsilon \)-modified HJ-PDE. This result is formalized in the following theorem.

\begin{thm}\label{theorem_viscosity_sol_hjiPde}
    The function \( \overline{V}(t,x) \) is the unique viscosity solution of the \( \epsilon \)-modified HJ-PDE:
    \begin{align*}
        D_{t} \overline{V}(t,x) +  \min\left\{ \epsilon_2 , \mathcal{H}(t,x, D_{x}\overline{V})\right\} = 0,
    \end{align*}
    with the Hamiltonian defined as
    \begin{align*}
        \mathcal{H}(t, x, p) = \max_{u \in \mathcal{U}} \min_{d \in \mathcal{D}} \left[ f(t, x, u, d) \cdot p + \epsilon_2 \right],
    \end{align*}
    and the terminal condition
    \begin{align*}
        \overline{V}(T,x) = g(x).
    \end{align*}
    
    Similarly, \( \underline{V}(t,x) \) is the unique viscosity solution of the \( \epsilon \)-modified HJ-PDE:
    \begin{align*}
        D_{t} \underline{V}(t,x) +  \min\left\{ -\epsilon_2 , \mathcal{H}(t,x, D_{x}\underline{V})\right\} = 0,
    \end{align*}
    with the Hamiltonian
    \begin{align*}
        \mathcal{H}(t, x, p) = \max_{u \in \mathcal{U}} \min_{d \in \mathcal{D}} \left[ f(t, x, u, d) \cdot p - \epsilon_2 \right],
    \end{align*}
    and the terminal condition
    \begin{align*}
        \underline{V}(T,x) = g(x).
    \end{align*}
\end{thm}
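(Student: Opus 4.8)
The plan is to prove that \( \overline{V} \) and \( \underline{V} \) are the unique viscosity solutions of their respective \(\epsilon\)-modified HJ-PDEs by recognising each as the value function of a differential game that differs from the original game \eqref{eq_main_value_func_star} only by a constant running cost \(\pm\epsilon_2\), and then running the same three-step program that underlies the standard result \eqref{eq_hjb_Pde_star}: (i) a dynamic programming principle (DPP); (ii) the derivation of the viscosity sub- and supersolution inequalities from the DPP; and (iii) uniqueness via a comparison principle. I would carry out the argument for \( \overline{V} \); the case of \( \underline{V} \) is identical after replacing \(\epsilon_2\) by \(-\epsilon_2\) throughout.

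The cleanest route I would take first is a state-augmentation reduction that lets me quote \eqref{eq_hjb_Pde_star} directly rather than re-deriving it. Introduce an auxiliary scalar coordinate \( y \) with trivial dynamics \( \dot{y}(s)=\epsilon_2 \), \( y(t)=0 \), so that \( y(\tau)=\int_t^\tau \epsilon_2\,ds \), and a terminal cost \( \tilde{g}(x,y)=g(x)+y \) on the augmented state \( (x,y) \) with dynamics \( \tilde{f}=(f,\epsilon_2) \). Then \( \overline{V}(t,x) \) as defined in \eqref{eq_main_value_func} is exactly the standard backward-reachable-tube value function \( \tilde{V}(t,x,0) \) of this augmented game, and because \( y \) enters the payoff additively while its dynamics are independent of \( (x,u,d) \), the augmented value separates as \( \tilde{V}(t,x,y)=\overline{V}(t,x)+y \). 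Applying \eqref{eq_hjb_Pde_star} to the augmented game and substituting \( D_y\tilde{V}\equiv 1 \), \( D_x\tilde{V}=D_x\overline{V} \), \( D_t\tilde{V}=D_t\overline{V} \) into the augmented Hamiltonian \( \max_{u\in\mathcal{U}}\min_{d\in\mathcal{D}}[f\cdot D_x\tilde{V}+\epsilon_2 D_y\tilde{V}] \) collapses it to \( \max_{u\in\mathcal{U}}\min_{d\in\mathcal{D}}[f\cdot D_x\overline{V}+\epsilon_2] \), which is precisely the modified Hamiltonian \( \mathcal{H} \) of the statement; the terminal condition \( \overline{V}(T,x)=g(x) \) follows since at \( t=T \) the only admissible \( \tau \) is \( T \). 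Because \( \tilde{f} \) is bounded and Lipschitz in the state whenever \( f \) is (the added coordinate is constant), the augmented game satisfies assumptions \eqref{eq_main_system_assumtion_1}--\eqref{eq_main_system_assumtion_2}, so \eqref{eq_hjb_Pde_star} applies verbatim.

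For a self-contained derivation I would instead work from the DPP for \( \overline{V} \), namely \( \overline{V}(t,x)=\inf_{\beta}\sup_{u}\min\{\inf_{s\in[t,t+h]}[\int_t^s\epsilon_2\,dr+g(\phi(s))],\,\int_t^{t+h}\epsilon_2\,dr+\overline{V}(t+h,\phi(t+h))\} \). The infimum over the stopping time produces the outer minimum in the PDE, while the \( \inf_{\beta}\sup_{u} \) over nonanticipative strategies, together with the Isaacs-type minimax that already validates \eqref{eq_hjb_Pde_star}, produces the Hamiltonian; the constant running cost contributes the additive \(\epsilon_2\) term inside the minimax. Expanding a smooth test function against this DPP yields the variational inequality \( \min\{D_t\overline{V}+\mathcal{H}(t,x,D_x\overline{V}),\,g(x)-\overline{V}\}=0 \), which is then rewritten in the single-equation form of the statement. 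Uniqueness in step (iii) comes for free: since the modified Hamiltonian differs from the original one in \eqref{eq_hjb_Pde_star} only by an additive constant, it inherits the same continuity and Lipschitz-in-\(p\) properties, so the comparison principle used for the standard HJ-PDE applies unchanged and pins down \( \overline{V} \) as the unique viscosity solution.

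The main obstacle I anticipate is the passage from the DPP to the viscosity inequalities, and in particular the bookkeeping of how the running cost enters the outer minimum: one must track the competition between the ``stop now'' branch (value \( g \)) and the ``continue'' branch (value \( \epsilon_2 h+\overline{V}(t+h,\phi) \)) to fix the threshold in \( \min\{\,\cdot\,,\mathcal{H}\} \) correctly, and verify that the inner infimum over \( \tau \) and the outer minimax commute as \( h\to 0 \). The augmentation route outsources most of this to \eqref{eq_hjb_Pde_star}; there the only points needing care are confirming the separation identity \( \tilde{V}(t,x,y)=\overline{V}(t,x)+y \) and that differentiating it coordinatewise is legitimate in the viscosity sense, i.e.\ that test functions for \( \overline{V} \) lift to test functions for \( \tilde{V} \) by adding \( y \).
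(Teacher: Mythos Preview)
Your augmentation route is genuinely different from the paper's: the paper works directly, first establishing a dynamic programming principle for $\overline V$ (Theorem~\ref{theorem_belman_optimality}), a monotonicity inequality $\overline V(t,x)\le \overline V(t+\sigma,x)+\epsilon_2\sigma$ (Lemma~\ref{lemma_value_func_monotonic}), boundedness and Lipschitz continuity (Lemmas~\ref{lemma_bounded_value_fnc}--\ref{lemma_lipschitz_value_fnc}), and an auxiliary Hamiltonian lemma (Lemma~\ref{lemma_hamiltonian_1}), and then verifies the viscosity sub- and supersolution inequalities by contradiction in the Evans--Souganidis style. Your augmentation would outsource all of this to~\eqref{eq_hjb_Pde_star}, which is more economical when it works.

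There is, however, a genuine gap: your augmentation does not land on the PDE stated in the theorem. Applying~\eqref{eq_hjb_Pde_star} to the augmented system $(\dot x,\dot y)=(f,\epsilon_2)$ with $\tilde g(x,y)=g(x)+y$ yields $D_t\tilde V+\min\{0,\tilde{\mathcal H}\}=0$, and after substituting $\tilde V=\overline V+y$ this reads $D_t\overline V+\min\{0,\mathcal H\}=0$ with $\mathcal H=\max_{u}\min_{d}[f\cdot p+\epsilon_2]$, whereas the theorem asserts $D_t\overline V+\min\{\epsilon_2,\mathcal H\}=0$. Writing $\mathcal H_0$ for the unperturbed Hamiltonian, one has $\min\{\epsilon_2,\mathcal H\}=\epsilon_2+\min\{0,\mathcal H_0\}$ but $\min\{0,\mathcal H\}=\min\{0,\mathcal H_0+\epsilon_2\}$, and these differ whenever $-\epsilon_2<\mathcal H_0<0$; for instance with $f\equiv0$ one gets $\overline V\equiv g$, $D_t\overline V=0$, $\mathcal H=\epsilon_2$, so your equation gives $0$ while the stated one gives $\epsilon_2$. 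Your direct DPP sketch has the same issue: the variational form $\min\{D_t\overline V+\mathcal H,\,g-\overline V\}=0$ you write down rewrites to the threshold-$0$ equation, not the threshold-$\epsilon_2$ one. This is exactly the ``bookkeeping of how the running cost enters the outer minimum'' that you flag as the main obstacle, and neither of your routes actually resolves it; to match the statement you would have to argue separately why the freezing threshold shifts from $0$ to $\epsilon_2$, which is what the paper's direct argument attempts via the monotonicity lemma rather than via augmentation.
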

\iflongversion
    The proof of Theorem \ref{theorem_viscosity_sol_hjiPde} is given in the appendix.
\else
    The proof of Theorem \ref{theorem_viscosity_sol_hjiPde} is given in the extended version of the paper \cite{}.
\fi

As a consequence of Theorem~\ref{theorem_viscosity_sol_hjiPde}, we obtain
\begin{align} \label{eqn:Vtheta_bnds}
    \underline{V}(t,x) \leq V_{\theta}(t,x) \leq \overline{V}(t,x).
\end{align}
Finally, incorporating \( \epsilon_1 \), which perturbs the boundary condition, we have:
\begin{align}\label{eqn:Vtheta_boundary}
    V_{\theta}(T,x) = g(x) \pm \epsilon_1.
\end{align}
Thus, by combining Equation~\eqref{eqn:V_bnds} with Equations~\eqref{eqn:Vtheta_bnds} and~\eqref{eqn:Vtheta_boundary}, we recover the bound in Equation~\eqref{eq_value_func_star_nn_bounded}.
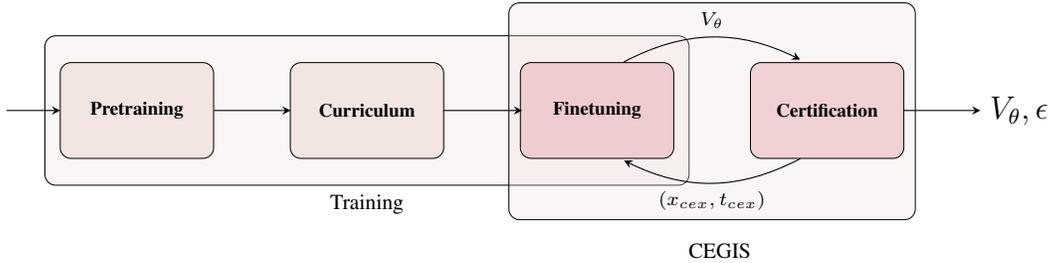
\begin{figure*}[h]
\vspace{-5pt}
\centering
\resizebox{0.8\textwidth}{!}{%
\begin{tikzpicture}

\begin{pgfonlayer}{background}
  \draw [fill=lightred, line width=0.2pt, fill opacity=0.3, text opacity=1, rounded corners] 
    (2.3,10.6) rectangle (10.7,8.65);
  \node at (6.5,8.4) {\footnotesize Training};
\end{pgfonlayer}

\begin{scope}[on background layer]
\end{scope}

\draw [fill=lightred, line width=0.2pt, fill opacity=0.3, text opacity=1, rounded corners] 
    (8.35,11.03) rectangle (13.65,8.2);

\draw [fill=BrickRed, line width=0.2pt, fill opacity=0.2, text opacity=1, rounded corners] 
    (8.5,10.25) rectangle (10.5,9) node[pos=.5] (learner) {\scriptsize \textbf{Finetuning}};

\draw [fill=BrickRed, line width=0.2pt, fill opacity=0.2, text opacity=1, rounded corners] 
    (11.5,10.25) rectangle (13.5,9) node[pos=.5] (certifier) {\scriptsize \textbf{Certification}};

\draw [fill=lightred, line width=0.2pt, fill opacity=0.8, text opacity=1, rounded corners, align=center] 
    (2.5,10.25) rectangle (4.5,9) node[pos=.5] (pretraining) {\scriptsize \textbf{Pretraining}};

\draw [fill=lightred, line width=0.2pt, fill opacity=0.8, text opacity=1, rounded corners, align=center] 
    (5.5,10.25) rectangle (7.5,9) node[pos=.5] (subquo) {\scriptsize \textbf{Curriculum}};

\draw[-stealth] ([xshift=0.8em]pretraining.east) -- ([xshift=-0.7em]subquo.west);

\draw[-stealth] ([xshift=0.7em]subquo.east) -- ([xshift=-0.8em]learner.west);

\draw[-stealth] ([xshift=1em,yshift=1.15em]learner.north) to[bend left=27] ([xshift=-1em,yshift=1.3em]certifier.north);
\draw[-stealth] ([xshift=-1em,yshift=-1.2em]certifier.south) to[bend left=27] ([xshift=1em,yshift=-1.2em]learner.south);

\node at (11,10.8) {\scriptsize $V_{\theta}$};
\node at (11,8.45) {\scriptsize $(x_{cex},t_{cex})$};
\node at (11.1,7.8) {\footnotesize CEGIS};

\draw[-stealth] (1.8,9.625) -- (2.5,9.625);

\draw[-stealth] (13.5,9.625) -- (14.5,9.625);
\node[anchor=west] at (14.5,9.625) {\large $V_{\theta}, \epsilon$};

\end{tikzpicture}
}
\caption{Flowchart of the training and verification Procedure}
\label{fig:cegis}
\end{figure*}

\section{Case Studies}\label{sec6:case_study}
Given the page limitations, we prioritize detailing the key steps of the training and verification loop to ensure clarity, and thus omit a full comparison of different reach/avoid examples. The provided toolbox\footnote{\url{https://github.com/nikovert/CARe}} includes additional examples, such as an evader-pursuer scenario, forward and backward reachability problems, and implementations of both the reachable tube and the reachable set.

Let us consider the simple double integrator example, a canonical second-order control system, \(\dot{x}_1 = x_2, \dot{x}_2 = u\), where $x_1$ represents position, $x_2$ represents velocity, and $u$ is the control input. We consider the problem of learning the forward reachable set, with the initial set defined by
\begin{equation}
    g(x) = x_1^2 + x_2^2 - R^2,
\end{equation}
with \(R^2 = 0.5\).

\subsection{Training of \(V_{\theta}(t,x)\)}
The training procedure, outlined in Algorithm 1 and illustrated in Fig. \ref{fig:cegis} follows a curriculum learning strategy with three phases:
\begin{itemize}
    \item \textbf{Pretraining phase:} Focuses on boundary conditions at the final time \(t = T\).
    \item \textbf{Curriculum phase:} Gradually extends the time horizon from \([T, T]\) to \([t_{0}, T]\).
    \item \textbf{Finetuning phase:} Further refines the model to minimize loss before certification.
\end{itemize}
Each training epoch involves generating a random batch of samples and computing the loss \(\mathcal{L}(t_i, x_i, \theta)\) for each sample \(i \in \{1, \ldots, N\}\). The formal error bounds on the reachable set can be conservative as the results in Theorem \ref{theorem_viscosity_sol_hjiPde} assume the worst-case realisation of \(\epsilon_2\) along any trajectory. Consequently, even when the mean training loss is low, the existence of regions of the state space with high network loss can lead to conservative over- or under-approximations of the reachable set. To mitigate this, we not only minimize the mean error over the batch but also the maximum error, leading to the training objective of minimizing:
\begin{align*}
    \frac{1}{N} \sum_i \mathcal{L}(t_i, x_i, \theta) + \lambda_{\max} \max_{i} \mathcal{L}(t_i, x_i, \theta).
\end{align*}
Here, \(\lambda_{\max}\) is set to \(10\%\) during the curriculum phase and \(30\%\) during the finetuning phase. The finetuning phase uses a patience counter, terminating only when the loss consistently stays below a threshold \(\lambda_{\epsilon} \epsilon_2\), with \(\lambda_{\epsilon} = 95\%\). This modified approach to the curriculum training scheme presented in \cite{bansal2021deepreach} reduces the total training time while ensuring sufficient confidence in the model prior to progressing to the certification phase.

\begin{algorithm} \label{alg:1}
\caption{Training of $V_\theta(t,x)$}
\begin{algorithmic}[1]
\State \textbf{Input:} Thresholds $\epsilon_1$ and $\epsilon_2$
\State \textbf{Initialize:} Model $V_\theta$, $t_{\mathrm{current}} \gets T$
\While{Training}
    \State Generate samples $x_i$ and $t_i \in [t_{\mathrm{current}}, T]$
    \State Compute model output $\hat{y} = V_\theta(t_i, x_i)$
    \State Compute loss $\mathcal{L}_{total} = \mathcal{L}_{mean} + \lambda_{max} \cdot \mathcal{L}_{max}$
    \State Backpropagate $\mathcal{L}_{total}$ and update parameters $\theta$ 
    \If{$\mathcal{P}_{\mathrm{pre-training}}$ \textbf{and} $\text{max}(\mathcal{L}_{1}) < \epsilon_1$}
        \State Progress to $\mathcal{P}_{\mathrm{curriculum}}$
    \ElsIf{$\mathcal{P}_{\mathrm{curriculum}}$ \textbf{and} $\text{max}(\mathcal{L}_{2}) < \epsilon_2$}
        \State Expand time horizon: $t_{\mathrm{current}} \gets t_{\mathrm{current}} -\Delta_t$
        \If{$t_{\mathrm{current}} = 0$}
            \State Progress to $\mathcal{P}_{\mathrm{finetune}}$
            \State $p \gets 0$
        \EndIf
    \ElsIf{$\mathcal{P}_{\mathrm{finetune}}$ \textbf{and} $\text{max}(\mathcal{L}_{2}) < \lambda_{\epsilon}\epsilon_2$}
            \State $p \gets p + 1$
            \If{$p > 1000$}
                \State \textbf{break} \Comment{Stop training}
            \EndIf
    \EndIf
\EndWhile
\State \textbf{Output:} Trained model $V_\theta$
\end{algorithmic}
\end{algorithm}
To improve performance and enable smaller networks to approximate the value function effectively, we introduce a polynomial layer as the first layer of the network. This allows subsequent layers to operate not only on the original inputs \((t,x)\) but also on polynomial transformations, such as \((t^2, x^2)\).

\begin{figure}[h]
    \centering
    \includegraphics[width=\columnwidth]{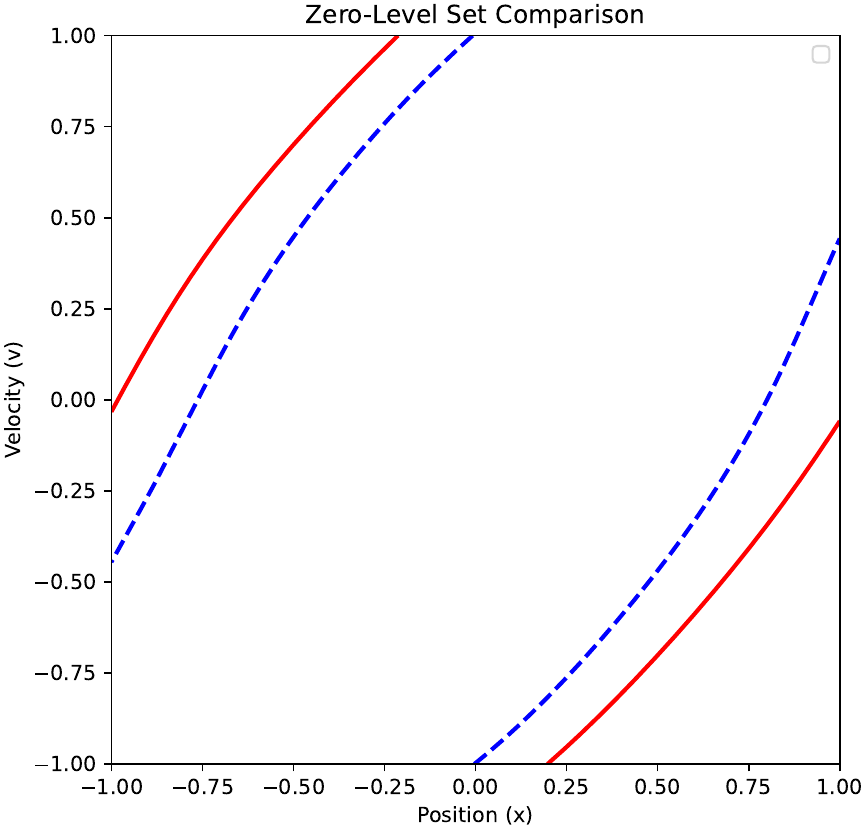}
    \caption{The over-approximated reachable set is shown in red, while the learned reachable set is depicted in blue. The initial set at time \(t_0 = 0\) is a circle around the centre of radius \(0.25\), the time horizon is \(T=1\).}
    \label{fig:double_int_trained}
\end{figure}

\subsection{Formal Verification}
To formally verify that the trained network represents an $\epsilon$-accurate value function, we generate a symbolic representation \(V_{\theta}(t, x)\) to instantiate the quantifier-free SMT query described in Section~\ref{sec4:smt}. To obtain the corresponding derivative terms \( D_{t} V_{\theta}(t, x) \) and \( \mathcal{H}(t, x, D_{x}V_{\theta}) \) we apply symbolic differentiation. To further parallelize the SMT call, we decompose the absolute value expressions of Equations \eqref{formula_1}--\eqref{formula_2} into separate calls, e.g., we seek $x$ s.t.
\begin{align*}
    & x \in \mathcal{X} \land V_{\theta}(T, x) - g(x) > \epsilon_1, \\
    \lor \; & x \in \mathcal{X} \land V_{\theta}(T, x) - g(x) < -\epsilon_1.
\end{align*}
As SMT with nonlinear real arithmetic is undecidable in general, no exact decision procedure can exist for arbitrary HJ-PDEs~\cite{DBLP:series/txtcs/KroeningS08}. However, \texttt{dreal} provides a $\delta$-complete decision procedure that allows for user-specified $\delta$ deviations in the satisfying assignments~\cite{DBLP:conf/cade/GaoAC12,gao2013dreal}. Consequently, while satisfying assignments may be spurious, unsatisfiability carries over to the exact problem. Hence, if the SMT query is unsatisfiable, it formally establishes a global upper bound on the loss function across the entire domain of interest.


\begin{table}[h]
    \centering
    \small
    \caption{Training and verification results of the CEGIS loop}
    \setlength{\tabcolsep}{4pt} 
    \renewcommand{\arraystretch}{1.1} 
    \begin{tabular}{|c|c|c|c|c|}
        \hline
        Iter. & $\epsilon$ & Training (s) & Verif. (s) & Result \\ 
        \hline
        1 & 0.30 & 836.92 & 1367& Certified \\ 
        2 & 0.27 & - & 22 & Counterexample found \\ 
        3 & 0.27 & 181.27 & 2735 & Certified \\ 
        4 & 0.243 & - & 5537 & Certified \\ 
        5 & 0.2187 & - & 237& Counterexample found \\
        \hline
    \end{tabular}
    \label{tab:results}
\end{table}
Checking SMT queries with nonlinear real arithmetic can be computationally expensive, especially as unsatisfiability constitutes a full proof for validity. In practice, if a counterexample exists, the SMT query typically terminates within a reasonable timeframe. 

For illustration, we consider training and verifying the double integrator system using a single hidden layer with 16 neurons, employing sine activation functions. After training with \(\epsilon \coloneqq (\epsilon_1 + \epsilon_2 (T-t_0))\), and  \(T=1, t_0=0, \epsilon_2 = 0.95\epsilon = 0.285, \epsilon_1 = 0.05\epsilon = 0.015\) we progress to certification, which takes 1367 seconds resulting in no counterexample being found. Subsequently, we attempt to reduce \( \epsilon \) by \(10\%\). Within 22 seconds, one of the parallelized SMT queries finds a counterexample. Thus, at this stage, our trained neural network satisfies \( \mathcal{L}(t,x) < 0.3 \) but not \( \mathcal{L}(t,x) < 0.27 \). Following a Counterexample-Guided Inductive Synthesis (CEGIS) approach, we leverage the identified counterexample to refine our model (Fig. \ref{fig:cegis}). Specifically, we re-enter the finetuning phase of training with the reduced \( \epsilon \) this time ensuring that \(10\%\) of our samples are drawn near the counterexample. This refinement improves the model such that a subsequent verification attempt fails to find a new counterexample, thereby establishing a tighter bound on the loss. We continue the CEGIS loop for a total of 5 iterations with the results shown in Table \ref{tab:results}. The fine-tuning phase of the final iteration is unable to obtain an empirical loss below \(0.2187\), thus, the best certifiable result with \(\epsilon = 0.243\) is used to produce the final over-approximated reachable set, shown in Fig. \ref{fig:double_int_trained}.

Our approach is applicable to arbitrary continuous-time reachability problems. However, the use of \texttt{dreal} practically limits the network size, as increasing the depth and number of neurons adds complexity to the formulas \eqref{formula_1}-\eqref{formula_2}. If an empirical result is satisfactory, Theorem \ref{theorem_viscosity_sol_hjiPde} can provide an over- or under-approximation of the reachable set, even when only an empirical bound on the loss is available.

\section{Conclusions}\label{sec7:conclusion}
The primary goal of this paper is to establish a relation between the training loss and the induced level of approximation error on the final BRS/BRT. To this end, we introduce a modified HJ-PDE allowing us to relate a bound on the loss function used during training, to the true solutions of the reachability problem. To obtain the bound on the loss function, we could use an empirical data-driven approach to estimating the bound on the training loss, though this would fail to constitute a formal certificate. Thus we introduce an approach to verification using a Satisfiability Modulo Theory solver, enabling us to provide a formal certificate for approximate reachability.
\iflongversion
    \appendix
For the proofs in the remaining sections, we ease notation by omitting the subscript of \(\epsilon\), i.e. \(\epsilon_2 = \epsilon\).  we introduce the payoff function:
\begin{align}\label{eq_main_payoff_function}
    P(u,d) &= P_{x,t} (u(\cdot), \beta(\cdot)) \nonumber \\
    &= \int_{t}^{T} \epsilon \, ds + g(\phi(T,t,x,u(\cdot),\beta(\cdot))),
\end{align}
where \( g:\mathbb{R}^m \to \mathbb{R} \) satisfies the following conditions:
\begin{align}
    |g(x)| &\leq C_{2}, \label{eq_main_final_cost_1} \\
    |g(x)-g(\hat{x})| &\leq C_{2} |x - \hat{x}|, \label{eq_main_final_cost_2}
\end{align}
for some constant \( C_{2} \) and for all \( t_{0} \leq t \leq T \), \( x, \hat{x} \in \mathbb{R}^{m} \), \( u \in \mathcal{U} \), and \( d \in \mathcal{D} \). In the differential game setup considered in this paper, the control \( u \) is chosen to maximize \( P(u,d) \), while the disturbance \( d \) is chosen to minimize \( P(u,d) \). The term \( \epsilon \) represents a positive perturbation \( (\epsilon \in \mathbb{R}^{+}) \). Since the proofs for \( \underline{V} \) follow directly from those of \( \overline{V} \), we focus only on the case of \( \overline{V} \) and ease notation by using \( V \) to denote \( \overline{V} \). We ease notation be introducting \(\mathbf{x}_{u,\beta}(\tau) \coloneqq \phi(\tau,t,x,u(\cdot),\beta(\cdot))\).

\subsection{Bellman Optimality}
\begin{thm}\label{theorem_belman_optimality}
    The value function defined in Equation~\eqref{eq_main_value_func} satisfies the Bellman optimality principle:
    \begin{align*}
        \overline{V}(t,x) &= \inf_{\beta(\cdot) \in \Delta_{[t,t+\sigma]}} \sup_{u(\cdot) \in \mathcal{M}_{[t,t+\sigma]}} 
        \min \biggl\{ \\
        & \overline{V}(t+\sigma, \mathbf{x}_{u,\beta}(t+\sigma)) + \int_{t}^{t+\sigma} \epsilon \, ds, \\
        & \inf_{\tau \in [t, t+\sigma]} \Bigl[\int_{t}^{\tau} \epsilon \, ds + g(\mathbf{x}_{u,\beta}(\tau))\Bigr] \biggr\}.
    \end{align*}
\end{thm}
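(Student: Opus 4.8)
The plan is to prove the claimed identity by establishing the two inequalities $\overline{V}(t,x)\le W(t,x)$ and $\overline{V}(t,x)\ge W(t,x)$ separately, where $W(t,x)$ denotes the right-hand side of the stated principle and where I abbreviate the cost along a fixed pair $(u,\beta)$ as $J(u,\beta)=\inf_{\tau\in[t,T]}\big[\int_t^\tau\epsilon\,ds+g(\mathbf{x}_{u,\beta}(\tau))\big]$. The common starting point is a purely structural decomposition: splitting the infimum over $\tau\in[t,T]$ at the intermediate time $t+\sigma$, and using additivity of the running integral together with the semigroup (flow) property $\phi(\tau,t,x,u(\cdot),\beta(\cdot))=\phi\big(\tau,t+\sigma,\mathbf{x}_{u,\beta}(t+\sigma),u(\cdot),\beta(\cdot)\big)$ for $\tau\ge t+\sigma$, I obtain
\begin{align*}
J(u,\beta)
&=\min\Big\{\inf_{\tau\in[t,t+\sigma]}\Big[\int_t^\tau\epsilon\,ds+g(\mathbf{x}_{u,\beta}(\tau))\Big],\\
&\qquad\quad \int_t^{t+\sigma}\epsilon\,ds+\inf_{\tau\in[t+\sigma,T]}\Big[\int_{t+\sigma}^\tau\epsilon\,ds+g(\mathbf{x}_{u,\beta}(\tau))\Big]\Big\}.
\end{align*}
The tail infimum is exactly the integrand defining $\overline{V}$ based at $(t+\sigma,\mathbf{x}_{u,\beta}(t+\sigma))$; the outer $\min$ is precisely what produces the $\min\{\cdots\}$ in the statement and is the feature distinguishing this minimum-over-time payoff from a standard terminal-cost game.

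For $\overline{V}\ge W$, I would fix an arbitrary $\beta\in\Delta_{[t,T]}$ and set $\beta_1:=\beta|_{[t,t+\sigma]}$, which is readily checked to lie in $\Delta_{[t,t+\sigma]}$. Fixing any $u_1\in\mathcal{M}_{[t,t+\sigma]}$, nonanticipativity guarantees that the head trajectory and hence the splice state $y:=\mathbf{x}_{u_1,\beta_1}(t+\sigma)$ depend only on $u_1$. The restriction of $\beta$ to $[t+\sigma,T]$ with this fixed history induces a strategy $\tilde\beta\in\Delta_{[t+\sigma,T]}$, so $\sup_{u_2}(\text{tail cost against }\tilde\beta)\ge\overline{V}(t+\sigma,y)$; I therefore pick $u_2\in\mathcal{M}_{[t+\sigma,T]}$ whose tail cost exceeds $\overline{V}(t+\sigma,y)-\eta$. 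Concatenating $u=(u_1,u_2)$ and applying the decomposition shows $\sup_u J(u,\beta)\ge\min\{\cdots\}_{\beta_1,u_1}-\eta$; taking the supremum over $u_1$ and then the infimum over $\beta$ gives $\overline{V}\ge W-\eta$, and $\eta\downarrow0$ closes this direction.

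For the reverse inequality $\overline{V}\le W$, I would proceed by a gluing construction. Given $\eta>0$, select $\beta_1\in\Delta_{[t,t+\sigma]}$ that is $\eta$-optimal for $W$, and for each reachable splice state $y$ a strategy $\beta_2^{y}\in\Delta_{[t+\sigma,T]}$ that is $\eta$-optimal for $\overline{V}(t+\sigma,y)$. I then define $\beta\in\Delta_{[t,T]}$ by running $\beta_1$ on $[t,t+\sigma]$ and switching, on $[t+\sigma,T]$, to $\beta_2^{y}$ with $y$ the realized splice state (equivalently, indexing the tail strategy by the head control history $u|_{[t,t+\sigma]}$). For any $u\in\mathcal{M}_{[t,T]}$, decomposing $u=(u_1,u_2)$ and again using the decomposition bounds $J(u,\beta)$ above by $\min\{\cdots\}_{\beta_1,u_1}+\eta$; taking the supremum over $u$ and invoking the $\eta$-optimality of $\beta_1$ yields $\overline{V}\le W+2\eta$, and letting $\eta\downarrow0$ completes the proof.

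The main obstacle is the rigorous handling of nonanticipative strategies across the splice time $t+\sigma$. Two points demand care: verifying that restriction and the induced tail strategy genuinely remain nonanticipative on their subintervals (needed for the $\ge$ direction), and, for the $\le$ direction, checking that the state-dependent concatenation of $\beta_1$ with the family $\{\beta_2^{y}\}$ defines a single admissible strategy in $\Delta_{[t,T]}$. The latter is the standard technical crux of differential-game dynamic-programming proofs; it is handled by observing that the splice state is fully determined by the head control history (so the switch respects causality) and by a careful gluing of the near-optimal tail strategies. The Lipschitz and boundedness assumptions \eqref{eq_main_system_assumtion_1}--\eqref{eq_main_system_assumtion_2} on $f$ and \eqref{eq_main_final_cost_1}--\eqref{eq_main_final_cost_2} on $g$ ensure the splice state depends continuously on the control, which is what makes the $\eta$-approximations and the consistent selection of $\{\beta_2^{y}\}$ go through.
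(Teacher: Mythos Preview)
Your proposal is correct and follows essentially the same approach as the paper's proof: both establish the two inequalities $\overline{V}\le W$ and $\overline{V}\ge W$ separately via $\eta$- (respectively $\gamma$-) optimal selections of strategies and controls on the subintervals $[t,t+\sigma]$ and $[t+\sigma,T]$, followed by concatenation and the flow/semigroup property of $\phi$. Your explicit identification of the splice-time nonanticipativity checks and the state-indexed gluing $\{\beta_2^{y}\}$ is a useful clarification of steps that the paper leaves implicit.
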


\begin{proof}
We define 
\begin{align*}
    W(t,x) &= \inf_{\beta(\cdot) \in \Delta_{[t,t+\sigma]}} \sup_{u(\cdot) \in \mathcal{M}_{[t,t+\sigma]}} 
    \min \Bigg\{ \\
    & \quad V(t+\sigma, \mathbf{x}_{u,\beta}(t+\sigma)) + \int_{t}^{t+\sigma} \epsilon \, ds, \\
    & \quad \inf_{\tau \in [t, t+\sigma]} \left[ \int_{t}^{\tau} \epsilon \, ds + g(\mathbf{x}_{u,\beta}(\tau)) \right] \Bigg\}.
\end{align*}

\noindent \textbf{Part I: Lower bound of \( W(t,x) \):} \\
Fix \( \gamma > 0 \), and choose \( \beta_1 \in \Delta_{[t,t+\sigma]} \) such that:
\begin{align*}
    W(t,x) \geq & \sup_{u_1(\cdot) \in \mathcal{M}_{[t,t+\sigma]}} \min \Bigg\{ \\
    & V(t+\sigma,\mathbf{x}_{u_1,\beta_1}(t+\sigma)) + \int_{t}^{t+\sigma} \epsilon \, ds, \\
    & \hspace{-1cm} \inf_{\tau \in [t, t+\sigma]} \left[ \int_{t}^{\tau} \epsilon \, ds + g(\mathbf{x}_{u_1,\beta_1}(\tau)) \right] \Bigg\} - \gamma.
\end{align*}
Thus, for all \( u_1(\cdot) \in \mathcal{M}_{[t,t+\sigma]} \), there exists a \( \beta_1 \in \Delta_{[t,t+\sigma]} \) such that:
\begin{align} \label{eq_theorem_1_1}
    W(t,x) &\geq \min \Bigg\{ 
    V(t+\sigma, \mathbf{x}_{u_1,\beta_1}(t+\sigma)) +  \int_{t}^{t+\sigma} \epsilon \, ds, \nonumber \\
    & \inf_{\tau \in [t, t+\sigma]} \left[ \int_{t}^{\tau} \epsilon \, ds + g(\mathbf{x}_{u_1,\beta_1}(\tau)) \right] \Bigg\} - \gamma.
\end{align}

Now, consider the term \( V(t+\sigma,x) \):
\begin{align*}
    V(t+\sigma,x) &= \inf_{\beta(\cdot) \in \Delta_{[t+\sigma, T]}} \sup_{u(\cdot) \in \mathcal{M}_{[t+\sigma, T]}} \inf_{\tau \in [t+\sigma,T]} \Bigg[ \int_{t+\sigma}^{\tau} \epsilon \, ds \\
    & \quad + g(\phi(\tau, t+\sigma, x, u(\cdot), \beta(\cdot))) \Bigg].
\end{align*}
Similarly, we can find a \( \beta_2(\cdot) \in \Delta_{[t+\sigma, T]} \) and \( u_2(\cdot) \in \mathcal{M}_{[t+\sigma, T]} \) such that:
\begin{align*}
    V(t+\sigma, x) \geq &\inf_{\tau \in [t+\sigma,T]} \Bigg[ \int_{t+\sigma}^{\tau} \epsilon \, ds \\
    &+ g(\phi(\tau, t+\sigma,x,u_2(\cdot),\beta_2(\cdot))) \Bigg] - \gamma.
\end{align*}

Now, define the combined control policy \( u(\cdot) \in \mathcal{M}_{[t,T]} \) and disturbance strategy \( \beta(\cdot) \in \Delta_{[t,T]} \) as:
\begin{align*}
    u(\tau) &= \begin{cases}
        u_1(\tau) & \text{if } t \leq \tau < t + \sigma, \\
        u_2(\tau) & \text{if } t + \sigma \leq \tau \leq T.
    \end{cases} \\
    \beta[u](\tau) &= \begin{cases}
        \beta_1[u_1](\tau) & \text{if } t \leq \tau < t + \sigma, \\
        \beta_2[u_2](\tau) & \text{if } t + \sigma \leq \tau \leq T.
    \end{cases}
\end{align*}

Subsequently, we obtain:
\begin{align*}
    W(t,x) &\geq  \min \Big\{ \int_{t}^{t+\sigma} \epsilon \, ds + \\
    \inf_{\tau \in [t+\sigma,T]} &\left[ \int_{t+\sigma}^{\tau} \epsilon \, ds + g(\phi(\tau, t+\sigma, x, u_2(\cdot), \beta_2(\cdot))) \right], \\
    \inf_{\tau \in [t, t+\sigma]} &\left[ \int_{t}^{\tau} \epsilon \, ds + g(\phi(\tau, t, x, u_1(\cdot), \beta_1(\cdot))) \right] \Big\} - 2\gamma, \\
    \geq & \min \Big\{ \\
    \inf_{\tau \in [t+\sigma,T]} &\left[ \int_{t+\sigma}^{\tau} \epsilon \, ds + g(\phi(\tau, t+\sigma, x, u_2(\cdot), \beta_2(\cdot))) \right] \\
    \inf_{\tau \in [t, t+\sigma]} &\left[ \int_{t}^{\tau} \epsilon \, ds + g(\phi(\tau, t, x, u_1(\cdot), \beta_1(\cdot))) \right] \Big\} - 2\gamma, \\
    = & \inf_{\tau \in [t,T]} \left[ \int_{t}^{\tau} \epsilon \, ds + g(\phi(\tau, t, x, u(\cdot), \beta(\cdot))) \right] - 2\gamma.
\end{align*}

This holds true for all \( u(\cdot) \in \mathcal{M}_{[t,T]} \), and hence we conclude:
\begin{align*}
    W(t,x) \geq \sup_{u(\cdot) \in \mathcal{M}_{[t,T]}} \inf_{\tau \in [t,T]} \left[ \int_{t}^{\tau} \epsilon \, ds + g(\mathbf{x}_{u,\beta}(\tau)) \right] - 2\gamma.
\end{align*}
Finally, this gives:
\begin{align*}
    W(t,x) + 2\gamma \geq V(t,x).
\end{align*}

\noindent \textbf{Part II: Upper Bound \( W(t,x) \)}  \\
For all \( \beta_1(\cdot) \in \Delta_{[t,t+\sigma]} \), we have:
\begin{align*}
    W(t,x) \leq &\sup_{u(\cdot) \in \mathcal{M}_{[t,t+\sigma]}} \min\Bigl\{ \\ & V(t+\sigma, \mathbf{x}_{u,\beta_1}(t+\sigma)) + \int_{t}^{t+\sigma} \epsilon \, ds, \\
    &\inf_{\tau \in [t,t+\sigma]} \left[ \int_{t}^{\tau} \epsilon \, ds + g(\mathbf{x}_{u,\beta_1}(\tau)) \right] \Bigr\}.
\end{align*}
Then for a fixed \( \gamma > 0 \), there exists a \( u_1 \in \mathcal{M}_{[t,t+\sigma]} \) such that:
\begin{align*}
    W(t,x) &\leq \min \Bigl\{ V(t+\sigma, \mathbf{x}_{u_1,\beta_1}(t+\sigma)) + \int_{t}^{t+\sigma} \epsilon \, ds, \\
    &\quad \inf_{\tau \in [t,t+\sigma]} \left[ \int_{t}^{\tau} \epsilon \, ds + g(\mathbf{x}_{u_1,\beta_1}(\tau)) \right] \Bigr\} + \gamma.
\end{align*}
Consider the term
\begin{align*}
    V(t+\sigma, \mathbf{x}_{u,\beta}(t+\sigma)) &= \inf_{\beta \in \Delta_{[t+\sigma,T]}} \sup_{u(\cdot) \in \mathcal{M}_{[t+\sigma,T]}} \\
    \inf_{\tau \in [t+\sigma,T]} \Bigl[ \int_{t+\sigma}^{\tau} \epsilon \, ds &+ g(\phi(\tau, t+\sigma, x, u(\cdot), \beta(\cdot))) \Bigr].
\end{align*}
Then for all \( \beta_2 \in \Delta_{[t+\sigma,T]} \), we have:
\begin{align*}
    V(t+\sigma, x) &\leq \sup_{u(\cdot) \in \mathcal{M}_{[t+\sigma,T]}} \\ \inf_{\tau \in [t+\sigma,T]} &\Bigl[ \int_{t+\sigma}^{\tau} \epsilon \, ds + g(\phi(\tau, t+\sigma, x, u(\cdot), \beta_2(\cdot))) \Bigr].
\end{align*}
For a fixed \( \gamma > 0 \), there exists a \( u_2 \in \mathcal{M}_{[t+\sigma,T]} \) such that:
\begin{align*}
    V(t+\sigma, x) &\leq \inf_{\tau \in [t+\sigma,T]} \Bigl[ \int_{t+\sigma}^{\tau} \epsilon \, ds \\ 
    & + g(\phi(\tau, t+\sigma, x, u_2(\cdot), \beta_2(\cdot))) \Bigr] + \gamma.
\end{align*}
We define the combined control policy \( u(\cdot) \in \mathcal{M}_{[t,T]} \) and disturbance strategy \( \beta(\cdot) \in \Delta_{[t,T]} \) as before, such that:
\begin{align} \label{eq_theorem_1_2}
    W(t,x) &\leq \min \Bigl\{ \nonumber \\ 
    & \inf_{\tau \in [t+\sigma,T]} \Bigl[ \int_{t+\sigma}^{\tau} \epsilon \, ds + g(\phi(\tau, t+\sigma, x, u_2(\cdot), \beta_2(\cdot))) \Bigr], \nonumber \\
    &\quad \inf_{\tau \in [t,t+\sigma]} \Bigl[ \int_{t}^{\tau} \epsilon \, ds + g(\mathbf{x}_{u_1,\beta_1}(\tau)) \Bigr] \Bigr\} + 2\gamma. \nonumber \\
    &\leq \min \Bigl\{ \nonumber \\
    & \inf_{\tau \in [t+\sigma,T]} \Bigl[ \int_{t}^{\tau} \epsilon \, ds + g(\phi(\tau,t+\sigma,x,u_2(\cdot),\beta_2(\cdot))) \Bigr], \nonumber \\
    &\quad \inf_{\tau \in [t,t+\sigma]} \Bigl[ \int_{t}^{\tau} \epsilon \, ds + g(\mathbf{x}_{u,\beta_1}(\tau) \Bigr] \Bigr\} + 2\gamma. \nonumber \\
    &\leq \inf_{\tau \in [t,T]} \Bigl[ \int_{t}^{\tau} \epsilon \, ds + g(\mathbf{x}_{u,\beta}(\tau)) \Bigr] + 2\gamma.
\end{align}
It follows from Equation \eqref{eq_main_value_func} that:
\begin{align*}
    V(t,x) &= \inf_{\beta \in \Delta_{[t,T]}} \sup_{u \in \mathcal{M}_{[t,T]}} \inf_{\tau \in [t,T]} \Bigl[ \int_{t}^{\tau} \epsilon \, ds + g(\mathbf{x}_{u,\beta}(\tau)) \Bigr].
\end{align*}
Thus, there exists \( \beta \in \Delta_{[t,T]} \) such that for a given \( \gamma > 0 \):
\begin{align*}
    V(t,x) &\geq \sup_{u \in \mathcal{M}_{[t,T]}} \inf_{\tau \in [t,T]} \Bigl[ \int_{t}^{\tau} \epsilon \, ds + g(\mathbf{x}_{u,\beta}(\tau)) \Bigr] - \gamma.
\end{align*}
Thus for all \( u \in \mathcal{M}_{[t,T]} \), we get:
\begin{align}
    V(t,x) + \gamma &\geq \inf_{\tau \in [t,T]} \Bigl[ \int_{t}^{\tau} \epsilon \, ds + g(\mathbf{x}_{u,\beta}(\tau)) \Bigr].
\end{align}
Combining this with Equation \eqref{eq_theorem_1_2}, we get:
\begin{align}
    W(t,x) &\leq V(t,x) + 3\gamma.
\end{align}

Finally, since \( \gamma \) is arbitrary, we can let \( \gamma \to 0 \) and obtain:
\begin{align*}
    V(t,x) &= \inf_{\beta(\cdot) \in \Delta_{[t,t+\sigma]}} \sup_{u(\cdot) \in \mathcal{M}_{[t,t+\sigma]}} \min \Bigl\{ \\
    & \quad  V(t+\sigma, \mathbf{x}_{u,\beta}(t+\sigma)) + \int_{t}^{t+\sigma} \epsilon \, ds, \\
    &\quad \inf_{\tau \in [t,t+\sigma]} \Bigl[ \int_{t}^{\tau} \epsilon \, ds + g(\mathbf{x}_{u,\beta}(\tau)) \Bigr] \Bigr\}.
\end{align*}
\end{proof}

The proof of Theorem \ref{theorem_viscosity_sol_hjiPde} relies on several further properties of the value function, which will be provided in the form of Lemmas \ref{lemma_value_func_monotonic}--\ref{lemma_hamiltonian_1}. As in the previous section, we will use \( V \) for ease of notation. We begin with the monotonicity of the value function.
\subsection{Monotonicity of the value function}
\begin{lemma}\label{lemma_value_func_monotonic}
    For all $(t,x) \in [0,T]\times\R^{m}$
    \begin{align*}
        V(t,x) \leq V(t+\sigma,x) + \int_{t}^{t+\sigma}\epsilon ds \hfill
    \end{align*}
    and $V(T,x) = g(x)$
\end{lemma}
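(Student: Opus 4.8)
The terminal identity is immediate from the definition \eqref{eq_main_value_func}: at $t=T$ the interval $[T,T]$ collapses to a point, so $\tau=T$ is forced, the running term $\int_T^T \epsilon\,ds$ vanishes, and $\phi(T,T,x,u(\cdot),\beta(\cdot))=x$ for every admissible $u,\beta$; hence $V(T,x)=g(x)$. The remaining content is the one-sided inequality, and the plan is to obtain it from the dynamic programming principle of Theorem~\ref{theorem_belman_optimality} by discarding the second branch of the inner minimum.

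Write $A_{u,\beta}=V(t+\sigma,\mathbf{x}_{u,\beta}(t+\sigma))+\int_t^{t+\sigma}\epsilon\,ds$ and $B_{u,\beta}=\inf_{\tau\in[t,t+\sigma]}[\int_t^\tau\epsilon\,ds+g(\mathbf{x}_{u,\beta}(\tau))]$, so that Theorem~\ref{theorem_belman_optimality} reads $V(t,x)=\inf_\beta\sup_u\min\{A_{u,\beta},B_{u,\beta}\}$. Since $\min\{A_{u,\beta},B_{u,\beta}\}\le A_{u,\beta}$ and the summand $\int_t^{t+\sigma}\epsilon\,ds$ is a constant independent of $u$ and $\beta$, it factors out of the optimisation and I obtain $V(t,x)\le \int_t^{t+\sigma}\epsilon\,ds+\inf_\beta\sup_u V(t+\sigma,\mathbf{x}_{u,\beta}(t+\sigma))$. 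The crucial point is that the additive constant produced here is exactly the slack $\int_t^{t+\sigma}\epsilon\,ds$ asserted in the statement, so the entire claim reduces to the single auxiliary inequality $\inf_\beta\sup_u V(t+\sigma,\mathbf{x}_{u,\beta}(t+\sigma))\le V(t+\sigma,x)$.

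This auxiliary inequality is where the real work lies, and I expect it to be the main obstacle, because the state $\mathbf{x}_{u,\beta}(t+\sigma)=\phi(t+\sigma,t,x,u(\cdot),\beta(\cdot))$ reached at time $t+\sigma$ is in general not equal to $x$; one is therefore comparing the upper value of the auxiliary game on $[t,t+\sigma]$ with terminal payoff $V(t+\sigma,\cdot)$ against the value $V(t+\sigma,x)$ at the fixed point $x$. To close it I would fix, for each $\gamma>0$, a $\gamma$-optimal nonanticipative strategy for the game started at $(t+\sigma,x)$ and extend it to $\Delta_{[t,T]}$, exploiting the semigroup property of the flow $\phi$ so that the realised trajectory over the common horizon coincides with the reference trajectory emanating from $x$; letting $\gamma\to 0$ then yields the bound. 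Nonanticipativity is precisely what makes this extension admissible, and the matching of trajectories is transparent for autonomous dynamics such as those of the case study, where the flow depends only on elapsed time. Combining the two estimates gives $V(t,x)\le V(t+\sigma,x)+\int_t^{t+\sigma}\epsilon\,ds$, and the companion statement for $\underline{V}$ follows verbatim after replacing $\epsilon$ by $-\epsilon$ throughout.
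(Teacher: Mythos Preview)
Your route diverges from the paper's: you invoke the dynamic programming principle (Theorem~\ref{theorem_belman_optimality}) and reduce the claim to an auxiliary inequality, whereas the paper argues directly by contradiction from the definition~\eqref{eq_main_value_func}, splicing strategies on $[t,T]$ and $[t+\sigma,T]$ and appealing to the inclusion $[t+\sigma,T]\subset[t,T]$ in the inner $\tau$-infimum.

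The genuine gap is your auxiliary inequality
\[
\inf_{\beta\in\Delta_{[t,t+\sigma]}}\ \sup_{u\in\mathcal{M}_{[t,t+\sigma]}} V\bigl(t+\sigma,\mathbf{x}_{u,\beta}(t+\sigma)\bigr)\ \le\ V(t+\sigma,x),
\]
which is \emph{false} even for autonomous dynamics. Take $\dot{x}=u$ with $u\in[-1,1]$, trivial disturbance, $g(x)=-x^{2}$ (truncated outside a neighbourhood of the origin so that \eqref{eq_main_final_cost_1}--\eqref{eq_main_final_cost_2} hold), and $\epsilon=0$. Then $V(s,y)=-y^{2}$ for every $s$, because the inner infimum over $\tau$ is attained at the initial instant regardless of the control. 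Your left-hand side becomes $\sup_{u}\bigl[-\mathbf{x}_{u}(t+\sigma)^{2}\bigr]=\sup_{y\in[x-\sigma,x+\sigma]}(-y^{2})$, which strictly exceeds $-x^{2}=V(t+\sigma,x)$ whenever $x\ne 0$. The lemma itself holds (with equality) in this example, so your first estimate from Bellman is correct but too loose: discarding the $B$-branch of $\min\{A,B\}$ throws away precisely the information you need. Your sketch of the auxiliary step is also mismatched in time---you propose a $\gamma$-optimal strategy for the game started at $(t+\sigma,x)$, which lives in $\Delta_{[t+\sigma,T]}$, while the auxiliary inequality concerns a game on $[t,t+\sigma]$ with terminal cost $V(t+\sigma,\cdot)$; no semigroup identity forces the trajectory issued from $(t,x)$ to pass through $(t+\sigma,x)$.
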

\begin{proof}
    \( V(T,x) = g(x) \) is trivial and can be directly observed from the definition of the value function, Equation \eqref{eq_main_value_func}. 
    Consider
    \begin{align*}
        V(t,x) =  \inf_{\beta \in \Delta_{[t,T]}} \sup_{u \in \mathcal{M}_{[t,T]}} \inf_{\tau \in [t,T]}
        \Bigl[\int_{t}^{\tau}\epsilon \, ds + g(\mathbf{x}_{u,\beta}(\tau))\Bigr]
    \end{align*}
    Let us assume for the sake of contradiction that
    \begin{align*}
        V(t,x) > V(t+\sigma,x) + \int_{t}^{t+\sigma}\epsilon \, ds,
    \end{align*}
    which can be equivalently stated as
    \begin{align*}
        &\inf_{\beta_{1} \in \Delta_{[t,T]}} \sup_{u_{1} \in \mathcal{M}_{[t,T]}} \inf_{\tau \in [t,T]} \Bigl[\int_{t}^{\tau}\epsilon \, ds + g(\mathbf{x}_{u_1,\beta_1}(\tau))\Bigr] \\ 
        &> \inf_{\beta_{2} \in \Delta_{[t+\sigma,T]}} \sup_{u_{2} \in \mathcal{M}_{[t+\sigma,T]}} \inf_{\tau \in [t+\sigma,T]} \Bigl[\int_{t+\sigma}^{\tau}\epsilon \, ds \\
        & \qquad + g(\phi(\tau,t+\sigma,x,u_2(\cdot),\beta_2(\cdot)))\Bigr] + \int_{t}^{t+\sigma}\epsilon \, ds.
    \end{align*}
    Thus, for all \( \beta_{1} \in \Delta_{[t,T]} \), there exists a \( \beta_{2} \in \Delta_{[t+\sigma,T]} \) such that 
    \begin{align*}
        & \sup_{u_{1} \in \mathcal{M}_{[t,T]}} \inf_{\tau \in [t,T]} \Bigl[\int_{t}^{\tau}\epsilon \, ds + g(\mathbf{x}_{u_1,\beta_1}(\tau))\Bigr] \\ 
        &> \sup_{u_{2} \in \mathcal{M}_{[t+\sigma,T]}} \inf_{\tau \in [t+\sigma,T]} \Bigl[\int_{t+\sigma}^{\tau}\epsilon \, ds \\
        & \qquad + g(\phi(\tau,t+\sigma,x,u_2(\cdot),\beta_2(\cdot)))\Bigr] + \int_{t}^{t+\sigma}\epsilon \, ds.
    \end{align*}
    Furthermore, for all \( u_{2} \in \mathcal{M}_{[t+\sigma,T]} \), there exists \( u_{1} \in \mathcal{M}_{[t,T]} \) such that 
    \begin{align*}
        & \inf_{\tau \in [t,T]} \Bigl[\int_{t}^{\tau}\epsilon \, ds + g(\mathbf{x}_{u_1,\beta_1}(\tau))\Bigr] \\ 
        &> \inf_{\tau \in [t+\sigma,T]} \Bigl[\int_{t+\sigma}^{\tau}\epsilon \, ds + g(\phi(\tau,t+\sigma,x,u_2(\cdot),\beta_2(\cdot)))\Bigr]  \\
        & \qquad + \int_{t}^{t+\sigma}\epsilon \, ds.
    \end{align*}
    Now let us define:
    \begin{align*}
        u(\tau) \equiv \begin{cases}
        u_{1}(\tau) & \text{if } t \leq \tau < t + \sigma, \\
        u_{1}(\tau) & \text{if } t + \sigma \leq \tau \leq T.
        \end{cases}
    \end{align*}
    and
    \begin{align*}
        \beta[u](\tau) \equiv \begin{cases}
        \beta_{1}[u](\tau) & \text{if } t \leq \tau < t + \sigma, \\
        \beta_{2}[u](\tau) & \text{if } t + \sigma \leq \tau \leq T.
        \end{cases}
    \end{align*}
    Using the uniqueness of the solution of the ODE \eqref{eq_main_ODE}, it follows that
    \begin{align*}
        & \inf_{\tau \in [t,T]} \Bigl[\int_{t}^{\tau}\epsilon \, ds + g(\mathbf{x}_{u,\beta}(\tau))\Bigr] \\ 
        &> \inf_{\tau \in [t+\sigma,T]} \Bigl[\int_{t+\sigma}^{\tau}\epsilon \, ds + g(\phi(\tau,t+\sigma,x,u(\cdot),\beta(\cdot)))\Bigr]  \\
        &> \inf_{\tau \in [t+\sigma,T]} \Bigl[\int_{t}^{\tau}\epsilon \, ds + g(\phi(\tau,t+\sigma,x,u(\cdot),\beta(\cdot)))\Bigr],
    \end{align*}
    which is a contradiction, as the infimum over a larger set is always less than or equal to the infimum over a smaller set.
\end{proof}

As an immediate corollary of Lemma \ref{lemma_value_func_monotonic}, we have
\begin{corollary}\label{coro_value_function_decreasing}
        For all $(t,x) \in [0,T]\times\mathbb{R}^{m}$
    \begin{align*}
        V(\tau,x(\tau)) + \inf_{\beta \in \Delta_{[t,T]}}\hspace{5mm} \sup_{u \in \mathcal{M}_{[t,T]}}  \int_{t}^{\tau} \epsilon ds \leq \\  \inf_{\beta \in \Delta_{[t,T]}}\hspace{5mm} \sup_{u \in \mathcal{M}_{[t,T]}}  (\int_{t}^{\tau} \epsilon ds + g(\mathbf{x}_{u,\beta}(\tau))
    \end{align*}
\end{corollary}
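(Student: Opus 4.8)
The plan is to reduce the asserted inequality to a single pointwise comparison between \(V\) and \(g\), and then to transport that comparison through the \(\inf_{\beta}\sup_{u}\) operators. First I would note that the drift term \(\int_{t}^{\tau}\epsilon\,ds = \epsilon(\tau-t)\) does not depend on either the control \(u\in\mathcal{M}_{[t,T]}\) or the strategy \(\beta\in\Delta_{[t,T]}\), so it may be pulled out of the optimisation on the left-hand side and cancelled against the identical term on the right. After this cancellation the corollary is equivalent to the pointwise statement \(V(\tau,\mathbf{x}_{u,\beta}(\tau)) \leq g(\mathbf{x}_{u,\beta}(\tau))\), evaluated along the common trajectory \(\mathbf{x}_{u,\beta}\) before the \(\inf_{\beta}\sup_{u}\) is applied.

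The key ingredient is the elementary bound \(V(\tau,y)\leq g(y)\) for every \((\tau,y)\in[0,T]\times\R^{m}\). I would obtain this directly from the definition in Equation~\eqref{eq_main_value_func}: in the inner infimum \(\inf_{r\in[\tau,T]}\bigl[\int_{\tau}^{r}\epsilon\,ds + g(\phi(r,\tau,y,u(\cdot),\beta(\cdot)))\bigr]\), the degenerate stopping choice \(r=\tau\) returns exactly \(g(y)\), so the infimum is at most \(g(y)\) for every admissible pair \((u,\beta)\); since \(g(y)\) is independent of \(u\) and \(\beta\), taking \(\sup_{u}\) and then \(\inf_{\beta}\) leaves the bound intact, yielding \(V(\tau,y)\leq g(y)\). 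This is the natural companion of the time-monotonicity recorded in Lemma~\ref{lemma_value_func_monotonic}, both expressing that adjoining the nonnegative \(\epsilon\)-drift and infimising over stopping times keeps \(V\) below the terminal-type payoff.

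With the pointwise bound established I would specialise it to \(y=\mathbf{x}_{u,\beta}(\tau)\), add the common term \(\int_{t}^{\tau}\epsilon\,ds\) to both sides, and then apply \(\inf_{\beta}\sup_{u}\). Because the same trajectory \(\mathbf{x}_{u,\beta}\) appears on both sides, an inequality \(F(u,\beta)\leq G(u,\beta)\) valid for every \((u,\beta)\) is preserved first under \(\sup_{u}\) (monotonicity of the supremum) and then under \(\inf_{\beta}\) (monotonicity of the infimum), which delivers precisely the claimed inequality. The main obstacle here is bookkeeping rather than analysis: one must confirm that the \(\inf\) and \(\sup\) are taken over the correct classes, namely nonanticipative strategies \(\beta\in\Delta_{[t,T]}\) played against measurable controls \(u\in\mathcal{M}_{[t,T]}\), that the monotonicity is applied to the \emph{same} functional of \((u,\beta)\) on each side, and that the constant drift term is factored out cleanly so that it cancels exactly.
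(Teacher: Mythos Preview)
Your proposal is correct. The paper does not actually write out a proof of this corollary; it only asserts that the inequality is ``an immediate corollary of Lemma~\ref{lemma_value_func_monotonic}''. Your route is slightly different from what that phrasing suggests, and in fact more elementary: rather than invoking the monotonicity lemma, you return to the definition~\eqref{eq_main_value_func} and use the degenerate stopping choice $r=\tau$ in the inner infimum to obtain the pointwise bound $V(\tau,y)\leq g(y)$, then push this through $\inf_{\beta}\sup_{u}$ by monotonicity. This is cleaner than a literal appeal to Lemma~\ref{lemma_value_func_monotonic}, since applying that lemma with $\sigma=T-\tau$ and the terminal condition $V(T,y)=g(y)$ yields only the weaker estimate $V(\tau,y)\leq g(y)+\epsilon(T-\tau)$; your one-line argument from the definition gives exactly the sharp inequality $V(\tau,y)\leq g(y)$ needed here.
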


\subsection{Existence and Uniqueness}
We provide 2 lemmas that show that the value function is bounded and Lipschitz continuous. This is required for the existence and uniqueness guarantees.
\begin{lemma}\label{lemma_bounded_value_fnc}
    The value function, defined in the Equation \eqref{eq_main_value_func}, is bounded i.e. $|V(t,x)| \leq C_{4}$ Where $C_{4}$ is a constant 
\end{lemma}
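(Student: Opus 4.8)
The plan is to obtain the bound by a direct, uniform estimate of the integrand appearing in the definition \eqref{eq_main_value_func} of $V$, exploiting only the boundedness of the terminal cost $g$ together with the boundedness of the constant running cost $\epsilon$; monotonicity of the $\inf$ and $\sup$ operators then transfers this pointwise estimate to the value function itself. Crucially, the constant $C_4$ must come out independent of $t$, $x$, and of the admissible control and strategy, so I will be careful to replace any $t$-dependent factor by a uniform upper bound.

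First I would bound the bracketed quantity. For any $\beta(\cdot) \in \Delta_{[t,T]}$, any $u(\cdot) \in \mathcal{M}_{[t,T]}$, and any $\tau \in [t,T]$, the running-cost integral satisfies $0 \leq \int_{t}^{\tau} \epsilon \, ds = \epsilon(\tau - t) \leq \epsilon(T-t) \leq \epsilon T$, since $\epsilon > 0$ is a constant and $t \leq \tau \leq T$ with $t \in [0,T]$. Combining this with the terminal-cost bound $|g(\mathbf{x}_{u,\beta}(\tau))| \leq C_2$ from \eqref{eq_main_final_cost_1} gives the two-sided estimate
\begin{align*}
    -C_2 \leq \int_{t}^{\tau} \epsilon \, ds + g(\mathbf{x}_{u,\beta}(\tau)) \leq \epsilon T + C_2,
\end{align*}
which holds uniformly in $\tau$, $u$, and $\beta$. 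This is the only place where the structural assumptions on the problem data enter.

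Next I would propagate this estimate through the nested optimization. Since the displayed bounds are independent of $\tau$, taking $\inf_{\tau \in [t,T]}$ leaves the bracketed expression inside the interval $[-C_2,\, \epsilon T + C_2]$; because this remains true for every admissible $u(\cdot)$, the inner supremum $\sup_{u(\cdot)}$ also lies in $[-C_2,\, \epsilon T + C_2]$; and finally, since the same holds for every admissible $\beta(\cdot)$, the outer infimum $\inf_{\beta(\cdot)}$ stays within that interval as well. This step uses only the elementary fact that the infimum and supremum of a family contained in a fixed interval again lie in that interval. Hence
\begin{align*}
    -C_2 \leq V(t,x) \leq \epsilon T + C_2.
\end{align*}

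Finally, setting $C_4 \coloneqq \epsilon T + C_2$ yields $|V(t,x)| \leq C_4$ for all $(t,x) \in [0,T] \times \R^m$, which is the claim; the bound is uniform in $x$ because the estimate on $g$ in \eqref{eq_main_final_cost_1} holds for all states. I do not expect a genuine obstacle here, as the argument is a routine uniform estimate; the only subtlety worth flagging is the passage from the $t$-dependent factor $\epsilon(T-t)$ to its uniform upper bound $\epsilon T$, which is what makes $C_4$ independent of the time argument. The identical reasoning applies verbatim to $\underline{V}$ (with running cost $-\epsilon$), with the roles of the upper and lower bounds simply exchanged.
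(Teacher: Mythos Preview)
Your proposal is correct and follows essentially the same approach as the paper: bound the cost functional uniformly using \eqref{eq_main_final_cost_1} and the constant running cost, then observe that the nested $\inf/\sup$ preserves the bound. If anything, you are more careful than the paper, which bounds only the payoff at $\tau=T$ and leaves the passage through $\inf_{\tau}$ and the uniformity in $t$ of the constant $C_4$ implicit.
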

\begin{proof}
    \begin{align*}
        P(u(\cdot),\beta(\cdot)) = \int_{t}^{T} \epsilon ds + g(\phi(T,t,x,u(\cdot),\beta(\cdot))) \hfill
    \end{align*}
    Using Equation \eqref{eq_main_final_cost_1} and the fact that $\epsilon$ is a constant, it follows that $|P(u(\cdot),\beta(\cdot))| \leq (T-t) \epsilon + C_{2}$. This holds for all $u(\cdot) \in \mathcal{M}_{[t,T]}$ and $\beta(\cdot) \in \Delta_{[t,T]}$. Thus implying $|V(t,x)| \leq C_{4}$.
\end{proof}

\begin{lemma}\label{lemma_lipschitz_value_fnc}
    The value function, defined in Equation \eqref{eq_main_value_func}, is Lipschitz continuous i.e. $|V(t,x)-V(\hat{t},\hat{x})| \leq C_{4}(|t-\hat{t}|+|x-\hat{x}|)$ for all $0 \leq t \leq T $, $0 \leq \hat{t} \leq T $ and $x,\hat{x} \in \mathbb{R}^{m}$.
\end{lemma}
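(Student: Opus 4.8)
The plan is to prove Lipschitz continuity separately in the spatial and temporal arguments and then combine them via the triangle inequality, writing $|V(t,x) - V(\hat t, \hat x)| \leq |V(t,x) - V(\hat t, x)| + |V(\hat t, x) - V(\hat t, \hat x)|$.

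For the spatial estimate I would fix $t$ and compare $V(t,x)$ with $V(t,\hat x)$. The key observation is that a nonanticipative strategy $\beta \in \Delta_{[t,T]}$ and a control $u \in \mathcal{M}_{[t,T]}$ do not depend on the initial state, so the same pair may be used to drive trajectories out of both $x$ and $\hat x$. Under identical control and disturbance, assumption \eqref{eq_main_system_assumtion_2} together with Grönwall's inequality yields $|\phi(\tau,t,x,u(\cdot),\beta(\cdot)) - \phi(\tau,t,\hat x,u(\cdot),\beta(\cdot))| \leq e^{C_2(\tau-t)}|x-\hat x| \leq e^{C_2 T}|x - \hat x|$ for all $\tau \in [t,T]$. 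Since $g$ is $C_2$-Lipschitz by \eqref{eq_main_final_cost_2} and the integral terms $\int_t^\tau \epsilon\, ds$ are independent of the initial state, the two integrands differ pointwise in $\tau$ by at most $M := C_2 e^{C_2 T}|x-\hat x|$. Because a uniform bound on integrands is preserved under $\inf_\tau$, under $\sup_u$, and under $\inf_\beta$, this bound propagates unchanged through the entire inf-sup-inf, giving $|V(t,x) - V(t,\hat x)| \leq M$.

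For the temporal estimate I would fix $x$ and take $\hat t = t + \sigma$ with $\sigma > 0$. One direction is immediate from the monotonicity Lemma \ref{lemma_value_func_monotonic}, which gives $V(t,x) - V(t+\sigma,x) \leq \int_t^{t+\sigma}\epsilon\,ds = \epsilon\sigma$. For the reverse direction I would invoke the Bellman principle, Theorem \ref{theorem_belman_optimality}, expressing $V(t,x)$ as $\inf_\beta \sup_u \min\{A(\beta,u), B(\beta,u)\}$ with $A(\beta,u) = V(t+\sigma, \mathbf{x}_{u,\beta}(t+\sigma)) + \epsilon\sigma$ and $B(\beta,u) = \inf_{\tau\in[t,t+\sigma]}[\int_t^\tau\epsilon\,ds + g(\mathbf{x}_{u,\beta}(\tau))]$. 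The uniform bound $|f|\leq C_1$ of \eqref{eq_main_system_assumtion_1} forces $|\mathbf{x}_{u,\beta}(\tau) - x| \leq C_1(\tau - t) \leq C_1\sigma$ on $[t,t+\sigma]$. Applying the spatial estimate just proved to $A$ gives $A(\beta,u) \geq V(t+\sigma,x) - C_2 e^{C_2 T} C_1 \sigma$, while the Lipschitz bound on $g$ together with the elementary fact $V(t+\sigma,x) \leq g(x)$ (evaluate the inner infimum at $\tau = t+\sigma$) gives $B(\beta,u) \geq g(x) - C_1 C_2 \sigma \geq V(t+\sigma,x) - C_1 C_2 \sigma$. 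Hence $\min\{A,B\} \geq V(t+\sigma,x) - C\sigma$ uniformly in $(\beta,u)$, and this survives $\sup_u$ and $\inf_\beta$, yielding $V(t,x) \geq V(t+\sigma,x) - C\sigma$, which combines with monotonicity to give $|V(t,x) - V(t+\sigma,x)| \leq C_4|t - \hat t|$.

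The main obstacle is the reverse temporal direction: unlike the spatial case, there is no single control/strategy pair one can transport between the horizons $[t,T]$ and $[t+\sigma,T]$, so the comparison must be routed through the dynamic programming identity and the $\min$-structure of Theorem \ref{theorem_belman_optimality}. The delicate point is bounding the short-horizon term $B(\beta,u)$ from below by $V(t+\sigma,x)$; this works precisely because the running cost $\int_t^\tau \epsilon\,ds$ is nonnegative and $V(t+\sigma,\cdot)$ never exceeds $g$, so no trajectory can make the infimum over $[t,t+\sigma]$ drop below $V(t+\sigma,x)$ by more than the $O(\sigma)$ displacement permitted by $|f|\leq C_1$. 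Finally, absorbing $\max\{\epsilon, C_1 C_2, C_2 e^{C_2 T}, C_1 C_2 e^{C_2 T}\}$ into a single constant $C_4$ and combining the two one-variable estimates through the triangle inequality completes the proof.
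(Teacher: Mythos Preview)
Your argument is correct, and it takes a genuinely different route from the paper's proof. The paper attacks the two-point estimate $V(t_1,x_1)-V(t_2,x_2)$ directly in the style of Evans--Souganidis: it selects $\gamma$-optimal strategies and controls on each horizon, concatenates them across the interval $[t_1,t_2]$, and then controls the trajectory discrepancy via the Bellman--Gr\"onwall lemma; the running-cost contribution reduces to $(t_2-t_1)\epsilon$ after cancellation. This is self-contained and does not invoke either the dynamic-programming principle (Theorem~\ref{theorem_belman_optimality}) or the monotonicity lemma (Lemma~\ref{lemma_value_func_monotonic}).

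Your decomposition into a spatial estimate (same horizon, same $(u,\beta)$, Gr\"onwall) and a temporal estimate (monotonicity for one direction, Bellman principle plus the short-horizon bound $|\mathbf{x}_{u,\beta}(\tau)-x|\le C_1\sigma$ and $V(t+\sigma,\cdot)\le g(\cdot)$ for the other) is more modular and arguably cleaner, since it reuses results already established in the appendix and avoids the bookkeeping of $\gamma$-optimal selections. The trade-off is that your proof is not logically independent: it requires Theorem~\ref{theorem_belman_optimality} and Lemma~\ref{lemma_value_func_monotonic} to have been proved without appealing to Lipschitz continuity, which is indeed the case here, so there is no circularity. Both arguments yield the same qualitative constant (an exponential in $C_2T$ times polynomial factors in $C_1,C_2,\epsilon$).
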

\begin{proof}
    We introduce the following notation
    \begin{align*}
        P_{t,x}(\tau,u,\beta[u]) \coloneqq \int_{t}^{\tau} \epsilon ds + g(\mathbf{x}_{u,\beta}(\tau)) \hfill
    \end{align*}

    \textbf{Part I}\\
    We will show that $V(t_{1}, x_{1}) - V(t_{2}, x_{2})  \leq \overline{C} (|t_{1}-t_{2}| + |x_{1}-x_{2}|) + 3 \gamma$, where $\overline{C}$ is some constant. Without loss of generality we can assume that $t_{1} \leq t_{2}$. Furthermore, for any $\beta_{1}$
    \begin{align*}
        V(t_{1},x_{1}) = &\inf_{\beta \in \Delta_{[t_{1},T]}} \sup_{u \in \mathcal{M}_{[t_{1},T]}} \inf_{\tau \in [t_{1},T]} P_{t_{1},x_{1}}(\tau,u,\beta[u]) \\
        \leq &\sup_{u \in \mathcal{M}_{[t_{1},T]}} \inf_{\tau \in [t_{1},T]}  P_{t_{1},x_{1}}(\tau,u,\beta_{1}[u]) \hfill
    \end{align*}
    Subsequently, for a fixed $\gamma > 0$, $\exists u_{1}\in \mathcal{M}_{[t_{1},T]}$ such that for all $\tau \in [t_{1},T]$ we have
    \begin{align}\label{eq_lemma_lip_1}
        V(t_{1},x_{1}) \leq P_{t_{1},x_{1}}(\tau,u_{1},\beta_{1}[u_{1}]) + \gamma \hfill
    \end{align}

    Similarly, we can show that for a fixed $\gamma > 0$, $\exists \beta_{2}\in \Delta_{[t_{2},T]}$ such that 
    \begin{align}
        V(t_{2},x_{2}) \geq \sup_{u \in \mathcal{M}_{[t_{2},T]}} \inf_{\tau \in [t_{2},T]}  P_{t_{2},x_{2}}(\tau,u,\beta_{2}[u]) - \gamma
    \end{align}
    For an arbitrary $u_{1} \in \mathcal{M}_{[t_{2},T]}$ we have that
    \begin{align*}
        V(t_{2}, x_{2}) \geq \inf_{\tau \in [t_{1},T]}  P_{t_{2},x_{2}}(\tau,u_{1},\tilde{\beta}[u_{1}]) - \gamma \hfill
    \end{align*}
    where $\tilde{\beta}[u_{1}]$ is defined as
    \begin{align*}
        \tilde{\beta}[u_{1}](\tau) \equiv \begin{cases}
        \beta_{1}[u_{1}](\tau) & \mathrm{ if } \quad t_{1} \leq \tau < t_{2} \\
        \beta_{2}[u_{1}](\tau)& \mathrm{ if } \quad t_{2} \leq \tau \leq T
    \end{cases} \hfill
    \end{align*}
    Using the definition of infimum $\exists \tau \in [t_{1},T]$ such that 
    \begin{align} \label{eq_lemma_lip_2}
        V(t_{2}, x_{2}) & \geq P_{t_{2},x_{2}}(\tau, u_{1},\tilde{\beta}[u_{1}]) - 2 \gamma \nonumber \\
        \Rightarrow - V(t_{2}, x_{2}) & \leq - P_{t_{2},x_{2}}(\tau, u_{1},\tilde{\beta}[u_{1}]) + 2 \gamma \hfill
    \end{align}
    
    Now let $x_{1}(\cdot)$ be the solution for the time horizon $(t_{1} \leq s \leq T)$ of the following ODE
    \begin{align*}
        \begin{cases}
            \frac{\mathrm{d}x_{1}}{\mathrm{d}s} =  f(s, x_{1}(s), u_{1}(s), \tilde{\beta}[u_{1}](s)) \\ 
            x_{1}(t_{1}) = x_{1}
        \end{cases}
    \end{align*}
    Let $x_{2}(\cdot)$  be the solution for the time horizon $(t_{2} \leq s \leq T)$ of the ODE
    \begin{align*}
        \begin{cases}
            \frac{\mathrm{d}x_{2}}{\mathrm{d}s} = f(s, x_{2}(s), u_{1}(s), \beta_{2}[u_{1}](s)) \\
            x_{2}(t_{2}) = x_{2}
        \end{cases}
    \end{align*}

    Then it follows from equation \eqref{eq_lemma_lip_1} and \eqref{eq_lemma_lip_2}, that
    \begin{align}\label{eq_lemma_lip_3}
        V(t_{1}, x_{1}) - V(t_{2}, x_{2}) \leq& \nonumber \\
        P_{t_{1},x_{1}}(\tau, u_{1}, \beta_{1}[u_{1}]) - &P_{t_{2},x_{2}}(\tau, u_{1},\tilde{\beta}[u_{1}]) + 3 \gamma \nonumber\\
        = \int_{t_{1}}^{\tau}\epsilon ds + g(x_{1}(\tau))
        &- \int_{t_{2}}^{\tau}\epsilon ds - g(x_{2}(\tau)) + 3 \gamma \nonumber \\
        = g(x_{1}(\tau)) - g(x_{2}(\tau)) & + 3 \gamma + (t_{2} - t_{1}) \epsilon
    \end{align}

    Using Equation \eqref{eq_main_final_cost_2}, it follows that 
    \begin{equation}
        \big| g(x_{1}(\tau)) - g(x_{2}(\tau)) \big| \leq C_2\big| x_{1}(\tau) - x_{2}(\tau) \big|
    \end{equation}
    
    Furthermore, since $\tilde{\beta}[u_{1}](s) = \beta_{2}[u_{1}](s)$ for all $s \in [t_2, T]$, we have that
    \begin{align*}
        |x_{1}(\tau) &- x_{2}(\tau)|
        = \Big| x_{1} + \int_{t_{1}}^{\tau} f(s, x_{1}(s), u_{1}(s), \tilde{\beta}[u_{1}](s)) ds \\
        & \quad - x_{2} - \int_{t_{2}}^{\tau} f(s, x_{2}(s), u_{1}(s), \beta_{2}[u_{1}](s)) ds \Big| \\
        & = \Big| x_{1} + \int_{t_{1}}^{t_{2}} f(s, x_{1}(s), u_{1}(s), \tilde{\beta}[u_{1}](s)) ds - x_{2}\\
        & \quad + \int_{t_{2}}^{\tau} f(s, x_{1}(s), u_{1}(s), \beta_{2}[u_{1}](s)) \\
        & \quad - f(s, x_{2}(s), u_{1}(s), \beta_{2}[u_{1}](s))ds \Big| \\
        & \leq \Big| x_{1}(t_{2}) - x_{2}(t_{2}) \Big| \\
        & \quad + \int_{t_{2}}^{\tau} \Big| f(s, x_{1}(s), u_{1}(s), \beta_{2}[u_{1}](s)) \\
        & \quad - f(s, x_{2}(s), u_{1}(s), \beta_{2}[u_{1}](s)) \Big| ds  \\
        & \leq \Big| x_{1}(t_{2}) - x_{2}(t_{2}) \Big| + C_1 \int_{t_{2}}^{\tau}  |x_1(s) - x_2(s)| ds \\
        & \leq \Big| x_{1}(t_{2}) - x_{2}(t_{2}) \Big| e^{(\tau-t_2)C_1}
    \end{align*}
    where the second inequality uses Equations \eqref{eq_main_system_assumtion_1} and \eqref{eq_main_system_assumtion_2} and the last inequality is due to the Bellman-Gronwall Lemma~\cite{Sastry_1999}.

    Using Equation \eqref{eq_main_system_assumtion_1}, the following holds for any $u \in \mathcal{U}, d \in \mathcal{D}$
    \begin{align}\label{eq_lemma_lip_4}
        |x_{1}(t_{2}) - x_{1}| \leq C_{3} |t_{1} - t_{2}|,
    \end{align}
    \begin{align*}
    |x_{1}(t_{2}) - x_{1}(t_{1})| 
    &= \Big| \int_{t}^{t_{2}} f(s, x_{2}(s), u(s), d(s)) \, ds \\
    & \quad - \int_{t}^{t_{1}} f(s, x(s), u(s), d(s)) \, ds \Big| \\
    &\leq \left| C_{1}(t_{2}-t) - C_{1}(t_{1}-t) \right| \\
    &\leq C_{1} |t_{1} - t_{2}|.
    \end{align*} 

    Thus we obtain
    \begin{align*}
        \big| g(x_{1}(\tau)) &- g(x_{2}(\tau)) \big| \leq C_2\big| x_{1}(\tau) - x_{2}(\tau) \big| \\
        & \leq C_2e^{(\tau-t_2)C_1} | x_{1}(t_{2}) - x_{2}(t_{2}) | \\
        & \leq C_2e^{(\tau-t_2)C_1} | x_{1}(t_{2}) - x_{1} |  \\
        & \quad + C_2e^{(\tau-t_2)C_1} |x_{1} - x_{2}|\\
        & \leq C_{3} e^{(\tau-t_2)C_1} |t_{1} - t_{2}|  +  e^{(\tau-t_2)C_1} |x_{1} - x_{2}| 
    \end{align*}
    Subsequently, there exists some constant $\overline{C}$, such that
    \begin{align}\label{eq_lemma_lip_6}
        V(t_{1}, x_{1}) - V(t_{2}, x_{2}) \leq \overline{C} (|t_{1}-t_{2}| + |x_{1}-x_{2}|) + 3 \gamma
    \end{align}
    \textbf{Part II}\\
    Now we will show that $V(t_{2}, x_{2}) - V(t_{1}, x_{1}) \leq \overline{C} (|t_{1}-t_{2}| + |x_{1}-x_{2}|) +3 \gamma$. We have that
    \begin{align*}
        V(t_{2}, x_{2}) =  \inf_{\beta \in \Delta_{[t_{2},T]}} \sup_{u \in \mathcal{M}_{[t_{2},T]}} \inf_{\tau \in [t_{2},T]} P_{t_{2},x_{2}}(\tau,u,\beta[u]) 
    \end{align*}
    Thus for all  $\beta \in \Delta_{[t_{2},T]}$.
    \begin{align*}
        V(t_{2}, x_{2}) \leq \sup_{u \in \mathcal{M}(t_{2})}  \inf_{\tau \in [t_{2},T]} P_{t_{2},x_{2}}(\tau,u,\beta[u])
    \end{align*}
    Subsequently, for a fixed $\gamma > 0$, $\exists u_{2}\in \mathcal{M}_{[t_{2},T]}$ such that for all $\tau \in [t_{2},T]$ we have
    \begin{align}\label{eq_lemma_lip_7}
        V(t_{2},x_{2}) \leq P_{t_{2},x_{2}}(\tau,u_{2},\beta[u_{1}]) + \gamma \hfill
    \end{align}

    Similarly, we can show that for a fixed $\gamma > 0$, $\exists \beta_{1}\in \Delta_{[t_{1},T]}$ such that
    \begin{align}\label{eq_lemma_lip_8}
        V(t_{1},x_{1}) \geq \sup_{u \in \mathcal{M}_{[t_{1},T]}} \inf_{\tau \in [t_{1},T]}  P_{t_{1},x_{1}}(\tau,u,\beta_{1}[u]) - \gamma
    \end{align}
    For an arbitrary $u_{1} \in \mathcal{M}_{[t_{2},T]}$ we have that
    \begin{align*}
        V(t_{1}, x_{1}) \geq \inf_{\tau \in [t_{1},T]}  P_{t_{1},x_{1}}(\tau,u_{1},\beta_{1}[u_{1}]) - \gamma \hfill
    \end{align*}
    
    Using the definition of infimum, $\exists \tau \in [t_{1},T]$ such that 
    \begin{align} \label{eq_lemma_lip_9}
        V(t_{1}, x_{1}) & \geq P_{t_{1},x_{1}}(\tau, u_{1},\beta_{1}[u_{1}]) - 2 \gamma \hfill \nonumber \\
        \Rightarrow - V(t_{1}, x_{1}) & \leq - P_{t_{1},x_{1}}(\tau, u_{1},\beta_{1}[u_{1}]) + 2 \gamma \hfill
    \end{align}
    Let us define the extended policy
    \begin{align*}
        \tilde{u}(\tau) \equiv \begin{cases}
        u_{1}(\tau) & \mathrm{ if } \quad t_{1} \leq \tau < t_{2} \\
        u_{2}(\tau)& \mathrm{ if } \quad t_{2} \leq \tau \leq T
    \end{cases} \hfill
    \end{align*}
    Now by using Equation \eqref{eq_lemma_lip_7} and \eqref{eq_lemma_lip_9}, it follows that
    \begin{align}\label{eq_lemma_lip_10}
        V(t_{2}, x_{2}) - V(t_{1}, x_{1}) \leq& \nonumber \\
        P_{t_{2},x_{2}}(\tau,\tilde{u},\beta_{1}[\tilde{u}]) - &P_{t_{1},x_{1}}(\tau, \tilde{u},\beta_{1}[\tilde{u}]) + 3 \gamma
    \end{align}
    Following similar arguments as in Part 1 and since we can choose $\gamma$ to be arbitrarily small, we obtain $|V(t,x)-V(\hat{t},\hat{x})| \leq C_{4}(|t-\hat{t}|+|x-\hat{x}|)$.
\end{proof}

Next, let us recall the definition of the modified Hamiltonian
\begin{equation}
    \mathcal{H}(t,x,p) = \max_{u \in \mathcal{U}} \min_{d \in \mathcal{D}} [f(t, x, u, d) \cdot p + \epsilon].
\end{equation}

Then the final Lemma required for Theorem \ref{theorem_viscosity_sol_hjiPde} is
\begin{lemma} \label{lemma_hamiltonian_1}
    Let $\varphi \in C^{1}([t_{0}, T] \times \R^m)$ and $\theta > 0$. Then if $\varphi$ satisfies
    \begin{equation}
        D_{t} \varphi (t_{0}, x_{0}) + \mathcal{H}(t_{0}, x_{0}, D_{x} \varphi (t_{0}, x_{0})) \leq -\theta \leq 0.
    \end{equation}
    then for a small enough $\sigma > 0$, there exists $ \beta \in \Delta_{[t_{0}, t_{0} + \sigma]}$ such that for all $u \in \mathcal{M}_{[t_{0}, t_{0} + \sigma]}$
    \begin{align*}
        \varphi ((t_{0} + \sigma), \phi(t_{0} + \sigma,t_{0},x_{0},u(\cdot),\beta(\cdot))) - \varphi (t_{0}, x_{0}) \\
        + \int_{t_{0}}^{t_{0}+\sigma} \epsilon ds  \leq -\frac{\theta \sigma}{2}
    \end{align*}
    Conversely, if $\varphi$ satisfies
    \begin{equation}
        D_{t} \varphi (t_{0}, x_{0}) + \mathcal{H}(t_{0}, x_{0}, D_{x} \varphi (t_{0}, x_{0})) \geq \theta \geq 0
    \end{equation}
    then for a small enough $\sigma > 0$, there exists $u \in \mathcal{M}_{[t_{0}, t_{0} + \sigma]}$ such that for all $ \beta \in \Delta_{[t_{0}, t_{0} + \sigma]}$
    \begin{align*}
        \varphi ((t_{0} + \sigma), \phi(t_{0} + \sigma,t_{0},x_{0},u(\cdot),\beta(\cdot))) - \varphi (t_{0}, x_{0}) \\
        + \int_{t_{0}}^{t_{0}+\sigma} \epsilon ds  \geq \frac{\theta \sigma}{2}.
    \end{align*}
\end{lemma}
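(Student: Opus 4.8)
The plan is to combine the fundamental theorem of calculus along system trajectories with the max-min structure of the Hamiltonian. Writing $\mathbf{x}(s) = \phi(s, t_{0}, x_{0}, u(\cdot), \beta(\cdot))$ and differentiating $\varphi$ along this trajectory, the quantity to be bounded becomes
\begin{align*}
    & \varphi(t_{0}+\sigma, \mathbf{x}(t_{0}+\sigma)) - \varphi(t_{0}, x_{0}) + \int_{t_{0}}^{t_{0}+\sigma} \epsilon \, ds \\
    & \quad = \int_{t_{0}}^{t_{0}+\sigma} \Big[ D_{t} \varphi(s, \mathbf{x}(s)) + f(s, \mathbf{x}(s), u(s), \beta[u](s)) \cdot D_{x} \varphi(s, \mathbf{x}(s)) + \epsilon \Big] ds.
\end{align*}
The task thus reduces to controlling the sign of this integrand uniformly over $s \in [t_{0}, t_{0}+\sigma]$, which I would do by first pinning down its value at the base point $(t_{0}, x_{0})$ and then propagating that bound over a short interval by continuity.

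For the first claim, since $D_{t}\varphi(t_{0},x_{0}) + \mathcal{H}(t_{0},x_{0},D_{x}\varphi(t_{0},x_{0})) \leq -\theta$ and $\mathcal{H} = \max_{u} \min_{d} [f \cdot p + \epsilon]$, for every control value $a \in \mathcal{U}$ there is a disturbance value $b(a) \in \mathcal{D}$ attaining the inner minimum (by compactness of $\mathcal{D}$ and continuity of $f$), so that $D_{t}\varphi(t_{0},x_{0}) + f(t_{0},x_{0},a,b(a)) \cdot D_{x}\varphi(t_{0},x_{0}) + \epsilon \leq -\theta$ for all $a$. I would then define the disturbance response by the feedback rule $\beta[u](s) = b(u(s))$: nonanticipativity is immediate since $\beta[u](s)$ depends only on $u(s)$, and measurability of $s \mapsto b(u(s))$ follows from a measurable selection of the argmin map $a \mapsto b(a)$. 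With this choice the integrand evaluated at $(t_{0},x_{0})$ is $\leq -\theta$ no matter which control is played.

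To pass from the base-point bound to the integral, I would invoke uniform continuity. The bound $|f| \leq C_{1}$ from \eqref{eq_main_system_assumtion_1} confines the trajectory to $|\mathbf{x}(s) - x_{0}| \leq C_{1} \sigma$, and since $\varphi \in C^{1}$ and $f$ is continuous with $\mathcal{U}, \mathcal{D}$ compact, the map $(s,y,a,b) \mapsto D_{t}\varphi(s,y) + f(s,y,a,b)\cdot D_{x}\varphi(s,y) + \epsilon$ is uniformly continuous on a compact neighborhood of $(t_{0}, x_{0})$ times $\mathcal{U} \times \mathcal{D}$. Its modulus of continuity $\omega(\sigma) \to 0$ as $\sigma \to 0$ uniformly in $(a,b)$, so for $\sigma$ small enough the integrand stays below $-\theta + \omega(\sigma) \leq -\theta/2$ for every $s$ and every admissible $u$; integrating over $[t_{0}, t_{0}+\sigma]$ yields the claimed bound $-\theta\sigma/2$.

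The converse is symmetric and in fact simpler. When $D_{t}\varphi + \mathcal{H} \geq \theta$, I would select a single maximizing control value $a^{*} \in \mathcal{U}$ achieving the outer max, so that $f(t_{0},x_{0},a^{*},d)\cdot D_{x}\varphi(t_{0},x_{0}) + \epsilon \geq \theta - D_{t}\varphi(t_{0},x_{0})$ for all $d \in \mathcal{D}$, and play the constant control $u(s) \equiv a^{*}$. No measurable selection is needed here, since a single control value is played against every strategy $\beta$, and the same uniform-continuity estimate forces the integrand above $\theta/2$, giving $\geq \theta\sigma/2$ after integration. The main obstacle I anticipate is the rigorous construction of the disturbance strategy $\beta$ in the first claim, namely verifying that the argmin feedback admits a measurable, genuinely nonanticipative selection, together with establishing that the continuity estimate is uniform across all admissible controls, which is precisely what licenses a single threshold $\sigma$.
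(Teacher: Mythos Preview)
Your proposal is correct and follows essentially the same route as the paper: write the increment of $\varphi$ along the trajectory as an integral of $D_t\varphi + f\cdot D_x\varphi + \epsilon$, pin down its sign at $(t_0,x_0)$ via the max--min structure of $\mathcal{H}$, and propagate by uniform continuity; for the first claim use a state-feedback disturbance $\beta[u](s)=b(u(s))$, for the converse a constant control $u\equiv a^*$. The one technical point you flag as an obstacle---the measurable selection of $a\mapsto b(a)$---is exactly where the paper's construction differs: rather than invoking a selection theorem, it uses compactness of $\mathcal{U}$ to extract a finite cover by balls $B(u_i,r_i)$, each paired with a single disturbance value $d_i$, and defines $\beta_1$ to be piecewise constant on the partition induced by these balls. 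This makes measurability of $s\mapsto\beta_1(u(s))$ immediate (the preimage of each $d_i$ under $\beta_1$ is a Borel set, so $\beta_1\circ u$ is measurable whenever $u$ is) and sidesteps any appeal to abstract selection results, at the cost of an intermediate $-3\theta/4$ step before the uniform-continuity argument brings the bound down to $-\theta/2$.
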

\begin{proof}
    \textbf{Part I}  we define
    \begin{align*}
        \Lambda(t, x, u, d) = D_{t} \varphi(t, x) + f(t, x, u, d) \cdot D_{x} \varphi(t,x) + \epsilon.
    \end{align*}
    Then if $\max_{u \in \mathcal{U}}  \min_{d \in \mathcal{D}} \Lambda (t_{0}, x_{0}, u, d) \leq - \theta < 0$ then for each $u \in \mathcal{U}$ there exists $d = d(u) \in \mathcal{D}$ such that $\Lambda (t_{0}, x_{0}, u, d) \leq - \theta$. Since $\Lambda$ is uniformly continuous we have
    \begin{align*}
        \Lambda(t_{0}, x_{0},\tilde{u}, d) \leq  \frac{-3\theta}{4} \hfill
    \end{align*}
    for all $\tilde{u} \in \mathcal{B}(u, r) \cap \mathcal{U}$ and some $r = r(u) > 0$. Since $\mathcal{U}$ is compact there exist finitely many distinct points $u_{1}, u_{2}, ... u_{n} \in \mathcal{U}$, $d_{1}, d_{2}, ... d_{n} \in \mathcal{D}$ and $r_{1}, r_{2}, ... r_{n} > 0$ such that 
    \begin{align*}
        \mathcal{U} \subset \bigcup_{i=1}^{n} B (u_{i}, r_{i}) \hfill
    \end{align*}
    and 
    \begin{align*}
        \Lambda(t_{0}, x_{0},\tilde{u}, d_{i}) \leq  \frac{-3\theta}{4} \quad  \mathrm{ for } \quad \tilde{u} \in \mathcal{B}(u_{i}, r_{i}) \hfill
    \end{align*}
    We define \( \beta_{1} : \mathcal{U} \rightarrow \mathcal{D} \), setting
    \begin{align*}
        \beta_{1}(u) = d_{k} \: \mathrm{if} \: u\in \mathcal{B}(u_{k}, r_{k})\setminus \bigcup_{i=1}^{k-1} \mathcal{B}(u_{i}, r_{i}) \quad (k=1,\ldots, n).
    \end{align*} 
    Thus $\Lambda(t_{0}, x_{0}, u, \beta_{1}(u)) \leq  \frac{-3\theta}{4}$ for all $u \in \mathcal{U}$. Since $\Lambda$ is uniformly continuous we therefore have for each sufficiently small $\delta > 0$
    \begin{align*}
        \Lambda(s, x(s), u, \beta_{1}(u)) \leq  \frac{-\theta}{2} \hfil
    \end{align*}
    for all $u \in \mathcal{U}, t_{0} \leq s \leq t_{0} + \delta$ and any solution $x(\cdot)$ of Equation \eqref{eq_main_ODE} on $(t_{0}, t_{0} + \delta)$ for any $d(\cdot)$, $u(\cdot)$ with initial condition $x(t_{0}) = x_{0}$. 
    
    Finally we define $\beta \in \Delta_{[t_{0},T]}$ in the following way: 
    \begin{align*}
        \beta[u](s) = \beta_{1}(u(s)) \hfill
    \end{align*}
    for each $u \in \mathcal{M}_{[t_{0},T]}$. It then follows that 
    \begin{align*}
        \Lambda(s, x(s), u(s), \beta[u](s)) \leq  \frac{-\theta}{2} \quad  (t_{0} \leq s \leq t_{0} + \sigma),
    \end{align*}
    for each $u \in \mathcal{M}_{[t_{0},T]}$. Notice that
    \begin{align}
        \varphi(t_{0} + \sigma, x(t_{0} + \sigma)) = \varphi(t_{0}, x_{0}) \nonumber \\ + \int_{t_{0}}^{t_{0} + \sigma} f(s, x(s), u(s), \beta[u](s)) \cdot D_{x} \varphi(s,x(s)) \nonumber \\
        + D_{t} \varphi(s, x(s)) ds,
    \end{align}
    such that integrating over $\Lambda$ yields
    \begin{align*}
        \varphi(t_{0} + \sigma, x(t_{0} + \sigma)) - \varphi(t_{0}, x_{0}) 
        + \int_{t_{0}}^{t_{0} + \sigma} \epsilon ds \leq  - \frac{-\theta\sigma}{2}
    \end{align*}
    \textbf{Part II} Set
    \begin{align*}
        \Lambda(t, x, u, d) = D_{t} \varphi(t, x) + f(t, x, u, d) \cdot D_{x} \varphi(t,x) + \epsilon.
    \end{align*}
    Then if
    \begin{align*}
        \max_{u \in \mathcal{U}}  \min_{d \in \mathcal{D}} \Lambda (t_{0}, x_{0}, u, d) \geq \theta >0, \hfill
    \end{align*}
    there exists a $u^{*} \in \mathcal{U}$ such that 
    \begin{align*}
        \min_{d \in \mathcal{D}} \Lambda (t_{0}, x_{0}, u^{*}, d). \geq \theta \hfill
    \end{align*}
    Since $\Lambda$ is uniformly continuous, we have that
    \begin{align*}
        \Lambda(s, x(s), u^{*}, d) \geq \frac{\theta}{2} \hfill
    \end{align*}
    provided $ t_{0} \leq s \leq t_{0}+\delta$ (for any small $\delta > 0$) and $x(\cdot)$ solves ODE on $(t_{0}, t_{0} + \delta)$ for any $u(\cdot)$, $d(\cdot)$ with initial candidates $x(t_{0}) = x_{0}$. Hence for $u(\cdot) \equiv u^{*}$ and any and $\beta \in \Delta_{[t_{0},T]}$
    \begin{align*}
        D_{t} \varphi(s, x(s)) + f(s, x(s), u(s), \beta[u](s)) \cdot D_{x} \varphi(s,x(s)) \\
        + \epsilon \geq \frac{\theta}{2}
    \end{align*}

    Integrating this expression form $t_{0}$ to $t_{0} + \sigma$ and subtracting $\varphi(t_{0},x_{0})$ we obtain 
    \begin{align*}
        \varphi ((t_{0} + \sigma), x(t_{0} + \delta)) - \varphi (t_{0}, x_{0})
        + \int_{t_{0}}^{t_{0}+\sigma} \epsilon ds  \geq \frac{\theta \sigma}{2} \hfill
    \end{align*}
\end{proof}

We are now in a position to provide the proof of Theorem \ref{theorem_viscosity_sol_hjiPde}
\subsection{Proof of Theorem \ref{theorem_viscosity_sol_hjiPde}}
\begin{proof}
    To prove this it is sufficient to prove the following (\cite{choi2021robust}, \cite{mitchell2001validating}, \cite{tomlin2001safety})
    \begin{enumerate}
        \item For $\varphi(t,x) \in C^{1}([0, T] \times \R^m)$ such that $V-\varphi$ attains a local maximum at $(t_{0},x_{0}) \in [0, T]\times\R^{m}$ then
        \begin{align*}
            D_{t} \varphi (t_{0},x_{0}) +  \min \{\epsilon ,
        \mathcal{H}(t_{0}, x_{0}, D_{x} \varphi (t_{0}, x_{0}))\} \geq 0 \hfill
        \end{align*} \vspace{2mm}
        
        \item For $\varphi(t,x) \in C^{1}([0, T] \times \R^m)$ such that $V - \varphi$ has a local minimum at $(t_{0},x_{0}) \in [0, T]\times\R^{m}$ then 
        \begin{align*}
        D_{t} \varphi (t_{0},x_{0}) +  \min \{\epsilon ,\mathcal{H}(t_{0}, x_{0}, D_{x} \varphi (t_{0}, x_{0}))\} \leq 0 \hfill
        \end{align*}
    \end{enumerate}

    \textbf{Part 1}\\
    Let $\varphi(t,x) \in C^{1}([0, T] \times \R^m)$ and suppose that $V-\varphi$ attains a local maximum at $(t_{0},x_{0}) \in [0, T]\times\mathbb{R}^{m}$. Let us assume, for the sake of contradiction, that $\exists \theta > 0$ such that 
    \begin{align*}
        D_{t} \varphi (t_{0},x_{0}) +  \min \{\epsilon ,\mathcal{H}(t_{0}, x_{0}, D_{x} \varphi (t_{0}, x_{0}))\} \leq -\theta \leq 0\hfill
    \end{align*}
    \textbf{Case I:} $\mathcal{H}(t_{0}, x_{0}, D_{x} \varphi (t_{0}, x_{0})) < \epsilon$. \\
    \begin{align*}
        D_{t} \varphi (t_{0},x_{0}) + \mathcal{H}(t_{0}, x_{0}, D_{x} \varphi (t_{0}, x_{0}))  \leq -\theta \hfill
    \end{align*}
    According to the Lemma \ref{lemma_hamiltonian_1}, this implies
    \begin{align}
    \label{eqn:varphi_lemma4}
        \varphi(t_{0} + \sigma, x(t_{0} + \sigma)) - \varphi (t_{0}, x_{0}) 
        + \int_{t_{0}}^{t_{0}+\sigma} \epsilon ds \leq \frac{-\theta \sigma}{2}
    \end{align}
    Since we know that $V-\varphi$ has a maximum at $(t_{0}, x_{0})$, it follows that $V(t_{0}, x_{0}) - \varphi(t_{0}, x_{0}) \geq V(t_{0} + \sigma, x_{0}(t_{0}+ \sigma)) - \varphi(t_{0} + \sigma, x_{0}(t_{0}+ \sigma))$, such that, using the result of Equation \eqref{eqn:varphi_lemma4},
    \begin{align*}
        V(t_{0} + \sigma, x_{0}(t_{0}+ \sigma)) + \int_{t_{0}}^{t_{0}+\sigma} \epsilon ds + \frac{\theta \sigma}{2} \leq  V(t_{0}, x_{0})
    \end{align*}
    But this leads to a contradiction, since we know from Theorem \ref{theorem_belman_optimality} that 
    \begin{align*}
        V(t, x) \leq V(t + \sigma, x(t + \sigma)) + \int_{t}^{t+\sigma} \epsilon ds.
    \end{align*}

    \textbf{Case II:} $\epsilon \leq \mathcal{H}(t_{0}, x_{0}, D_{x} \varphi (t_{0}, x_{0}))$ \\
    \begin{align*}
        D_{t} \varphi (t_{0},x_{0}) +  \epsilon \leq -\theta
    \end{align*}
    Integrating the above from $t_{0}$ to $t_{0} + \sigma$, it follows that
    \begin{align}
        \varphi(t_{0} + \sigma, x_{0}) - \varphi (t_{0}, x_{0}) 
        + \int_{t_{0}}^{t_{0}+\sigma} \epsilon ds \leq -\theta \sigma
    \end{align}
    
    Since we know that $V-\varphi$ has a maximum, following similar argumentation as before, 
    \begin{align*}
        V(t_{0} + \sigma, x_{0}) + \int_{t_{0}}^{t_{0}+\sigma} \epsilon ds + \theta \sigma \leq  V(t_{0}, x_{0})
    \end{align*}
    But from Lemma \ref{lemma_value_func_monotonic} we know that 
    \begin{align*}
        V(t, x) \leq V(t + \sigma, x) + \int_{t_{0}}^{t_{0}+\sigma} \epsilon ds.
    \end{align*}
    This contradiction implies
    \begin{align*}
            D_{t} \varphi (t_{0},x_{0}) +  \min \{\epsilon ,
        \mathcal{H}(t_{0}, x_{0}, D_{x} \varphi (t_{0}, x_{0}))\} \geq 0 \hfill
    \end{align*}

    \textbf{Part 2}\\
    Let $\varphi(t,x) \in C^{1}([0, T] \times \R^m)$ and suppose that $V-\varphi$ attains a local minimum at $(t_{0},x_{0}) \in [0, T]\times\mathbb{R}^{m}$. We assume, for the sake of contradiction, that $\exists \theta > 0$ such that 
    \begin{align}
        \label{eqn:thm2_part2}
        D_{t} \varphi (t_{0},x_{0}) +  \min \{\epsilon, \mathcal{H}(t_{0}, x_{0}, D_{x}V)\} \geq \theta.
    \end{align}
    This implies 
    \begin{align*}
        D_{t} \varphi (t_{0},x_{0}) + \epsilon \geq \theta \hfill
    \end{align*}
    such that 
    \begin{align*}
        \varphi(t_{0} + \sigma, x(t_{0} + \sigma)) - \varphi (t_{0}, x_{0}) 
        + \int_{t_{0}}^{t_{0}+\sigma} \epsilon ds \geq \theta \sigma \geq \frac{\theta \sigma}{2}.
    \end{align*}
    Furthermore, Equation \eqref{eqn:thm2_part2} implies
    \begin{align*}
        D_{t} \varphi (t_{0},x_{0}) + \mathcal{H}(t_{0}, x_{0}, D_{x}V) \geq \theta.
    \end{align*}
     such that by Lemma \ref{lemma_hamiltonian_1}, we obtain
    \begin{align*}
        \varphi ((t_{0} + \sigma), x(t_{0} + \sigma)) - \varphi (t_{0}, x_{0}) + 
        \int_{t_{0}}^{t_{0}+\sigma} \epsilon ds  \geq \frac{\theta \sigma}{2}
    \end{align*}
    Since $V - \varphi$ obtains a local minimum at $(t_{0}, x_{0})$, i.e.
    \begin{align*}
        V(t_{0}, x_{0}) - \varphi(t_{0}, x_{0})
        \leq \\ V(t_{0} + \sigma, x(t_{0}+\sigma)) - \varphi(t_{0} + \sigma, x(t_{0}+\sigma)).
    \end{align*}
    it follows that
    \begin{align}\label{eq_part2_case_1_1.1}
        V(t_{0} + \sigma, x(t_{0}+\sigma)) - V(t_{0}, x_{0}) + \int_{t_{0}}^{t_{0}+\sigma} \epsilon ds \geq \frac{\theta \sigma}{2}.
    \end{align}
    Following the same logic and we also obtain
    \begin{align}\label{eq_part2_case_1_1.2}
        V(t_{0} + \sigma, x_{0}) - V(t_{0}, x_{0}) + \int_{t_{0}}^{t_{0}+\sigma} \epsilon ds \geq \frac{\theta \sigma}{2}.
    \end{align}
    
    Recall the definition of $V(t,x)$ obtained from Theorem \ref{theorem_belman_optimality}. We consider two cases \\
    \textbf{Case I}
    \begin{align*}
        V(t_{0} + \sigma, x(t_{0}+\sigma)) + \int_{t_{0}}^{t_{0} +\sigma} \epsilon ds \\
        \leq \inf_{\tau \in [t_{0}, t_{0} + \sigma]} 
        \Big[\int_{t_{0}}^{\tau} \epsilon ds
        + g(x(\tau))\Big]
    \end{align*}
    then
    \begin{align*}
        V(t_{0},x_{0}) = V(t_{0} + \sigma, x(t_{0}+\sigma)) + \int_{t_{0}}^{t_{0} +\sigma} \epsilon ds 
    \end{align*}
    substituting this in equation \eqref{eq_part2_case_1_1.1} leads to a contradiction.

    \textbf{Case II}
    \begin{align*}
        \inf_{\tau \in [t_{0}, t_{0} + \sigma]} 
        \Big[\int_{t_{0}}^{\tau} \epsilon ds
        + g(x(\tau))\Big] \leq \\ V(t_{0} + \sigma, x(t_{0}+\sigma)) + \int_{t_{0}}^{t_{0} +\sigma} \epsilon ds 
    \end{align*}
    Then
    \begin{align*}
        V(t_{0},x_{0}) = \inf_{\beta(\cdot) \in \Delta_{[t_{0},t_{0}+\sigma]}} \sup_{\nu(\cdot) \in \mathcal{M}_{[t_{0},t_{0}+\sigma]}} \\ \inf_{\tau \in [t_{0}, t_{0} + \sigma]} 
        \Big[\int_{t_{0}}^{\tau} \epsilon ds
        + g(x(\tau))\Big]
    \end{align*}
    The minima in $\tau$ must occur at $\tau = t_{0}$ and the minimizer is unique. If this was not true then there would exist some $\tau \in [t_{0}, t_{0}+\sigma]$ such that 

    \begin{align*}
        V(t_{0},x_{0}) = \inf_{\beta(\cdot) \in \Delta_{[t_{0},t_{0}+\sigma]}} \sup_{u(\cdot) \in \mathcal{M}_{[t_{0},t_{0}+\sigma]}} \\ 
        \Big[\int_{t_{0}}^{\tau} \epsilon ds
        + g(x(\tau))\Big]
    \end{align*}

    Thus using corollary \ref{coro_value_function_decreasing} we get that 
    \begin{align*}
        V(\tau,x(\tau))+ \int_{t_{0}}^{\tau} \epsilon ds  \leq V(t_{0},x_{0}), \quad \tau \in [t_{0}, t_{0}+\sigma]
    \end{align*}
    This will lead to contradiction with equation \eqref{eq_part2_case_1_1.1}. Furthermore, we know that for all $\tau \in [t_{0}, t_{0}+ \sigma]$ using Lemma \ref{lemma_value_func_monotonic}
    \begin{align*}
        V(t_{0},x_{0}) \leq  V(t_{0}+\sigma,x_{0})+ \\ \inf_{\beta(\cdot) \in \Delta_{[t_{0},t_{0}+\sigma]}}\quad  \sup_{u(\cdot) \in \mathcal{M}_{[t_{0},t_{0}+\sigma]}}\int_{t_{0}}^{t_{0}+\sigma}\epsilon ds  \leq g(x_{0})
    \end{align*}
    Therefore we have 
    \begin{align*}
        V(t_{0},x_{0}) =  V(t_{0}+\sigma,x_{0})+ \\ \inf_{\beta(\cdot) \in \Delta_{[t_{0},t_{0}+\sigma]}}\quad  \sup_{u(\cdot) \in \mathcal{M}_{[t_{0},t_{0}+\sigma]}}\int_{t_{0}}^{t_{0}+\sigma}\epsilon ds = g(x_{0})
    \end{align*}
    This along with equation \eqref{eq_part2_case_1_1.2} leads to contradiction.
\end{proof}
\fi

\bibliographystyle{abbrv}        
\bibliography{bibliography}           

\end{document}
